\newcommand{\N}{\mathbb{N}}
\newcommand{\Exp}[1]{\mathbb{E}\left[#1\right]} 
\newcommand{\ignore}[1]{}
\title{Speed Scaling with Tandem Servers}
\author{Rahul Vaze}{School of Technology and Computer Science, \\Tata Institute of Fundamental Research, Mumbai, India}{rahul.vaze@gmail.com}{}{}
\author{Jayakrishnan Nair}{Department of Electrical Engineering, \\Indian Institute of Technology, Bombay, India}{jayakrishnan.nair@ee.iitb.ac.in}{}{}
\authorrunning{R.\, Vaze and J.\, Nair}
\subjclass{F.2.0: ANALYSIS OF ALGORITHMS AND PROBLEM COMPLEXITY}
\keywords{Speed Scaling, Online Algorithms, Tandem Servers}
\newcommand{\jk}[1]{  \ifthenelse{\boolean{showcomments}}
{ \textcolor{red}{(JK says:  #1)}} {}  }
\newcommand{\rv}[1]{  \ifthenelse{\boolean{showcomments}}
{ \textcolor{red}{(RV says:  #1)}} {}  }
\begin{document}
\maketitle
\def\bba{{\mathbb{a}}}
\def\bbb{{\mathbb{b}}}
\def\bbc{{\mathbb{c}}}
\def\bbd{{\mathbb{d}}}
\def\bbee{{\mathbb{e}}}
\def\bbff{{\mathbb{f}}}
\def\bbg{{\mathbb{g}}}
\def\bbh{{\mathbb{h}}}
\def\bbi{{\mathbb{i}}}
\def\bbj{{\mathbb{j}}}
\def\bbk{{\mathbb{k}}}
\def\bbl{{\mathbb{l}}}
\def\bbm{{\mathbb{m}}}
\def\bbn{{\mathbb{n}}}
\def\bbo{{\mathbb{o}}}
\def\bbp{{\mathbb{p}}}
\def\bbq{{\mathbb{q}}}
\def\bbr{{\mathbb{r}}}
\def\bbs{{\mathbb{s}}}
\def\bbt{{\mathbb{t}}}
\def\bbu{{\mathbb{u}}}
\def\bbv{{\mathbb{v}}}
\def\bbw{{\mathbb{w}}}
\def\bbx{{\mathbb{x}}}
\def\bby{{\mathbb{y}}}
\def\bbz{{\mathbb{z}}}
\def\bb0{{\mathbb{0}}}

\def\bydef{:=}
\def\ba{{\mathbf{a}}}
\def\bb{{\mathbf{b}}}
\def\bc{{\mathbf{c}}}
\def\bd{{\mathbf{d}}}
\def\bee{{\mathbf{e}}}
\def\bff{{\mathbf{f}}}
\def\bg{{\mathbf{g}}}
\def\bh{{\mathbf{h}}}
\def\bi{{\mathbf{i}}}
\def\bj{{\mathbf{j}}}
\def\bk{{\mathbf{k}}}
\def\bl{{\mathbf{l}}}
\def\bm{{\mathbf{m}}}
\def\bn{{\mathbf{n}}}
\def\bo{{\mathbf{o}}}
\def\bp{{\mathbf{p}}}
\def\bq{{\mathbf{q}}}
\def\br{{\mathbf{r}}}
\def\bs{{\mathbf{s}}}
\def\bt{{\mathbf{t}}}
\def\bu{{\mathbf{u}}}
\def\bv{{\mathbf{v}}}
\def\bw{{\mathbf{w}}}
\def\bx{{\mathbf{x}}}
\def\by{{\mathbf{y}}}
\def\bz{{\mathbf{z}}}
\def\b0{{\mathbf{0}}}
\def\alg{\mathsf{ALG}}
\def\opt{\mathsf{OPT}}
\def\off{\mathsf{OFF}}
\def\bA{{\mathbf{A}}}
\def\bB{{\mathbf{B}}}
\def\bC{{\mathbf{C}}}
\def\bD{{\mathbf{D}}}
\def\bE{{\mathbf{E}}}
\def\bF{{\mathbf{F}}}
\def\bG{{\mathbf{G}}}
\def\bH{{\mathbf{H}}}
\def\bI{{\mathbf{I}}}
\def\bJ{{\mathbf{J}}}
\def\bK{{\mathbf{K}}}
\def\bL{{\mathbf{L}}}
\def\bM{{\mathbf{M}}}
\def\bN{{\mathbf{N}}}
\def\bO{{\mathbf{O}}}
\def\bP{{\mathbf{P}}}
\def\bQ{{\mathbf{Q}}}
\def\bR{{\mathbf{R}}}
\def\bS{{\mathbf{S}}}
\def\bT{{\mathbf{T}}}
\def\bU{{\mathbf{U}}}
\def\bV{{\mathbf{V}}}
\def\bW{{\mathbf{W}}}
\def\bX{{\mathbf{X}}}
\def\bY{{\mathbf{Y}}}
\def\bZ{{\mathbf{Z}}}
\def\b1{{\mathbf{1}}}

\def\bbA{{\mathbb{A}}}
\def\bbB{{\mathbb{B}}}
\def\bbC{{\mathbb{C}}}
\def\bbD{{\mathbb{D}}}
\def\bbE{{\mathbb{E}}}
\def\bbF{{\mathbb{F}}}
\def\bbG{{\mathbb{G}}}
\def\bbH{{\mathbb{H}}}
\def\bbI{{\mathbb{I}}}
\def\bbJ{{\mathbb{J}}}
\def\bbK{{\mathbb{K}}}
\def\bbL{{\mathbb{L}}}
\def\bbM{{\mathbb{M}}}
\def\bbN{{\mathbb{N}}}
\def\bbO{{\mathbb{O}}}
\def\bbP{{\mathbb{P}}}
\def\bbQ{{\mathbb{Q}}}
\def\bbR{{\mathbb{R}}}
\def\bbS{{\mathbb{S}}}
\def\bbT{{\mathbb{T}}}
\def\bbU{{\mathbb{U}}}
\def\bbV{{\mathbb{V}}}
\def\bbW{{\mathbb{W}}}
\def\bbX{{\mathbb{X}}}
\def\bbY{{\mathbb{Y}}}
\def\bbZ{{\mathbb{Z}}}

\def\cA{\mathcal{A}}
\def\cB{\mathcal{B}}
\def\cC{\mathcal{C}}
\def\cD{\mathcal{D}}
\def\cE{\mathcal{E}}
\def\cF{\mathcal{F}}
\def\cG{\mathcal{G}}
\def\cH{\mathcal{H}}
\def\cI{\mathcal{I}}
\def\cJ{\mathcal{J}}
\def\cK{\mathcal{K}}
\def\cL{\mathcal{L}}
\def\cM{\mathcal{M}}
\def\cN{\mathcal{N}}
\def\cO{\mathcal{O}}
\def\cP{\mathcal{P}}
\def\cQ{\mathcal{Q}}
\def\cR{\mathcal{R}}
\def\cS{\mathcal{S}}
\def\cT{\mathcal{T}}
\def\cU{\mathcal{U}}
\def\cV{\mathcal{V}}
\def\cW{\mathcal{W}}
\def\cX{\mathcal{X}}
\def\cY{\mathcal{Y}}
\def\cZ{\mathcal{Z}}

\def\sfA{\mathsf{A}}
\def\sfB{\mathsf{B}}
\def\sfC{\mathsf{C}}
\def\sfD{\mathsf{D}}
\def\sfE{\mathsf{E}}
\def\sfF{\mathsf{F}}
\def\sfG{\mathsf{G}}
\def\sfH{\mathsf{H}}
\def\sfI{\mathsf{I}}
\def\sfJ{\mathsf{J}}
\def\sfK{\mathsf{K}}
\def\sfL{\mathsf{L}}
\def\sfM{\mathsf{M}}
\def\sfN{\mathsf{N}}
\def\sfO{\mathsf{O}}
\def\sfP{\mathsf{P}}
\def\sfQ{\mathsf{Q}}
\def\sfR{\mathsf{R}}
\def\sfS{\mathsf{S}}
\def\sfT{\mathsf{T}}
\def\sfU{\mathsf{U}}
\def\sfV{\mathsf{V}}
\def\sfW{\mathsf{W}}
\def\sfX{\mathsf{X}}
\def\sfY{\mathsf{Y}}
\def\sfZ{\mathsf{Z}}

\def\bydef{:=}
\def\sfa{{\mathsf{a}}}
\def\sfb{{\mathsf{b}}}
\def\sfc{{\mathsf{c}}}
\def\sfd{{\mathsf{d}}}
\def\sfee{{\mathsf{e}}}
\def\sfff{{\mathsf{f}}}
\def\sfg{{\mathsf{g}}}
\def\sfh{{\mathsf{h}}}
\def\sfi{{\mathsf{i}}}
\def\sfj{{\mathsf{j}}}
\def\sfk{{\mathsf{k}}}
\def\sfl{{\mathsf{l}}}
\def\sfm{{\mathsf{m}}}
\def\sfn{{\mathsf{n}}}
\def\sfo{{\mathsf{o}}}
\def\sfp{{\mathsf{p}}}
\def\sfq{{\mathsf{q}}}
\def\sfr{{\mathsf{r}}}
\def\sfs{{\mathsf{s}}}
\def\sft{{\mathsf{t}}}
\def\sfu{{\mathsf{u}}}
\def\sfv{{\mathsf{v}}}
\def\sfw{{\mathsf{w}}}
\def\sfx{{\mathsf{x}}}
\def\sfy{{\mathsf{y}}}
\def\sfz{{\mathsf{z}}}
\def\sf0{{\mathsf{0}}}

\def\Nt{{N_t}}
\def\Nr{{N_r}}
\def\Ne{{N_e}}
\def\Ns{{N_s}}
\def\Es{{E_s}}
\def\No{{N_o}}
\def\sinc{\mathrm{sinc}}
\def\dmin{d^2_{\mathrm{min}}}
\def\vec{\mathrm{vec}~}
\def\kron{\otimes}
\def\Pe{{P_e}}
\newcommand{\expeq}{\stackrel{.}{=}}
\newcommand{\expg}{\stackrel{.}{\ge}}
\newcommand{\expl}{\stackrel{.}{\le}}
\def\SIR{{\mathsf{SIR}}}

\def\nn{\nonumber}

%


\newlength{\figurewidth}\setlength{\figurewidth}{0.6\columnwidth}




\newcounter{one}
\setcounter{one}{1}
\newcounter{two}
\setcounter{two}{2}

\addtolength{\floatsep}{-\baselineskip}
\addtolength{\dblfloatsep}{-\baselineskip}
\addtolength{\textfloatsep}{-\baselineskip}
\addtolength{\dbltextfloatsep}{-\baselineskip}
\addtolength{\abovedisplayskip}{-1ex}
\addtolength{\belowdisplayskip}{-1ex}
\addtolength{\abovedisplayshortskip}{-1ex}
\addtolength{\belowdisplayshortskip}{-1ex}

\begin{abstract}

  Speed scaling for a tandem server setting is considered, where there
  is a series of servers, and each job has to be processed by each of
  the servers in sequence. Servers have a variable speed, their power
  consumption being a convex increasing function of the speed.
  We consider the worst case setting as well as the stochastic
  setting. In the worst case setting, the jobs are assumed to be of
  unit size with arbitrary (possibly adversarially determined) arrival
  instants. For this problem, we devise an online speed scaling
  algorithm that is constant competitive with respect to the optimal
  offline algorithm that has non-causal information. The proposed
  algorithm, at all times, uses the same speed on all active servers,
  such that the total power consumption equals the number of
  outstanding jobs. In the stochastic setting, we consider a more
  general tandem network, with a parallel bank of servers at each
  stage. In this setting, we show that random routing with a simple
  gated static speed selection is constant competitive. In both cases,
  the competitive ratio depends only on the power functions, and is
  independent of the workload and the number of servers.  
\end{abstract}

\maketitle
\section{Introduction}

Starting with the classical work \cite{yao1995scheduling}, the speed
scaling problem has been widely considered in literature, where there
is a single/parallel bank server with tuneable speed, and the canonical problem is
to find the optimal service speed/rate for servers that minimizes a
linear combination of the flow time (total delay) and total energy
\cite{albers2007unitsizeenergy}, called flow time plus energy, where
flow time is defined as the sum of the response times (departure minus
the arrival time) across all jobs.

Many interconnected practical systems such as assembly lines, flow
shops and job shops in manufacturing, traffic flow in a network of
highways, multihop telecommunications networks, and client-server
computer systems, however, are better modelled as network of
queues/servers.
Another important example is service systems with server specific
precedence constraints, where jobs have to be processed in a
particular order of servers. In such systems, for each job, service is
defined to be complete once it has been processed by a subset of
servers, together with a permissible order on service from different
servers.

The simplest such network is a $K$-server tandem setting, where there
are $K$ servers in series, and each object/job has to be processed by
all the $K$ servers in a serial order. With $K$-tandem servers, we
consider the speed scaling problem of minimizing flow time plus
energy, when the speed/service rate of each server is tuneable and
there is an associated energy cost attached to the chosen speed. The
control variables here include \emph{scheduling}, i.e., which job to run 
on each server, and \emph{speed scaling}, i.e., which speed to
operate each server at. In the worst case setting, the arrival sequence is 
arbitrary, and possibly adverserially determined. In this case, the 
performance metric is the competitive ratio, that is defined as the 
maximum of the ratio of the cost of the online
algorithm and the optimal offline algorithm $\opt$ that is allowed to
know the entire input sequence ahead of time, over all possible
inputs. In the stochastic setting, job arrivals occur according to a 
stochastic process. Here, the cost of an algorithm is the sum of the 
steady state averages of response time and energy consumption per job. 
The competitive ratio of an algorithm is in turn the ratio of its cost to that
of the optimal algorithm (that admits the above steady state averages). 
In both settings, the goal is to design algorithms that have a small competitive 
ratio.

\subsection{Related Work}
\subsubsection{Arbitrary Input Case}
With arbitrary input, where job arrival times and sizes are arbitrary (even chosen by an adversary), for speed scaling with a single server or bank of parallel servers, two classes
of problems have been studied: i) unweighted and ii) weighted, where in i)
the delay that each job experiences is given equal weight in the flow
time computation, while in ii) it is scaled by a weight that can be
arbitrary.

The weighted setting is fundamentally different that the
unweighted one, where it is known that constant-competitive online
algorithms are not possible \cite{bansal2009weighted}, even for a
single server, while constant competitive algorithms are known for the
unweighted case, even for arbitrary energy functions, e.g., the best known 
$2$-competitive algorithm
\cite{SpeedScalingOptFairRobust}. For more prior references on speed scaling, we refer the reader to \cite{SpeedScalingOptFairRobust, bansal2009speedconf}.

In addition to a single server, speed scaling problem has also been considered extensively  for a parallel bank of servers, where there is a single queue and multiple servers. With multiple servers, on an arrival of a new job, the decision is to choose which jobs to run on the multiple servers, by preempting other jobs if required, and what speed 
\cite{lam2008competitive, greiner2009bell, gupta2010scalably,
  lam2012improved, gupta2012scheduling}. The homogenous server case
was studied in \cite{lam2012improved, greiner2009bell}, i.e., power usage function 
is identical for all servers, while the heterogenous case was
addressed in \cite{gupta2010scalably, gupta2012scheduling, devanur2018primal}, where
power usage function is allowed to be different for different servers. 


\subsubsection{Stochastic Input Case}
Under stochastic input, research on two tandem servers with variable speed was initiated  in the classical work of \cite{rosberg1982optimal} \cite{hajek1984optimal}, that established that the optimal service rate of the first queue is monotonic in the joint queue state, and is of the bang-bang type. 
These results have also been extended for any number of tandem servers when each server has exponential service distribution \cite{weber1987optimal}.
These type of problems belong to the class of control of Markov decision processes for which general results have been also derived \cite{ghosh1990ergodic}. Typically, in early works, the objective function did not include an energy cost for increasing the speed of the service rate. To reflect the energy cost, \cite{xia2017service} considered the same problem as in \cite{rosberg1982optimal} in the presence of an average power constraint. Analytical results in this area have been limited to structural results, such as the monotonicity results, and that too for special input/service distributions, and no explicit optimal service rates are known. 
In the stochastic setting, with multiple parallel servers, the flow time plus
energy problem with multiple servers is studied under a fluid model
\cite{asghari2014energy, mukherjee2017optimal} or modelled as a Markov
decision process \cite{gebrehiwot2017near}, and near optimal policies
have been derived. 

\subsection{Our work} 
We consider the speed scaling problem in the tandem network setting, where there are multiple servers ($K$) in series. Each external job arrives at server $1$, and is defined to be {\it complete} once it has been processed by each of the $K$ servers in series. Each server has an identical power (energy) consumption function $P(.)$, i.e., if the server speed is $s$, then power consumed is $P(s)$. 
\subsubsection{Arbitrary Input}

We consider the arbitrary input setting, where jobs can arrive at arbitrary time on server $1$, arrival times possibly chosen by an adversary. However, we assume that each job has the same size/or requirement on any of the servers. Even for a single server setting, initial progress was made for unit sized jobs \cite{albers2007unitsizeenergy, lam2008speedunitsize, bansal2008schedulingunitsize, bansal2009speedunitsize, albers2014unitsizespeed}, which was later generalized for arbitrary job size. In the sequel, it is evident that the considered problem is challenging even with unit sized jobs.
The motivation to consider the arbitrary input setting is two-fold : i) that it is the most general, ii) that even if one assumes that the external arrivals to server $1$ have a nice distribution, with speed scaling by each of the server, the internal arrivals (arrivals at server $k$ corresponding to departures from server $k-1$) need not continue to have the same nice  distribution.
Under the arbitrary input setting, we consider the unweighted flow time + energy as the objective function, and the problem is to find an online algorithm with minimum competitiive ratio. 

The proposed algorithm ensures that there is at most one outstanding job with all servers other than server $1$. Let $n^1(t)$ be the number of outstanding jobs with server $1$, and let the 
total number of servers with outstanding jobs (called active) be $A(t)$ excluding the first server. Then the algorithm runs each active server (including server $1$) at the same speed of $P^{-1}\left(\frac{n^1(t)+A(t)+1}{A(t)+1}\right)$. Thus, the total power consumed across all servers is equal to the number of total outstanding jobs plus $1$, that could be spread across servers. 
The main result of this paper is as follows.

\begin{theorem}\label{thm:main} With unit sized jobs, and identical power consumption function $P$ for all servers, the competitive ratio of the proposed algorithm is at most $\left(6+ \left(12/P(s^\star)\right)\Delta(1)\right)$ where $\Delta(1) = P'(P^{-1}(1))$ and $1+P(s^\star) = s^\star P'(s^\star)$. 
For $P(s) = s^\alpha$, $\Delta(1) = \alpha$ and for $\alpha=2$, $P(s^\star)=2$, making the competitive ratio at most $18$.
\end{theorem}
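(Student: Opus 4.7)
The plan is to prove the competitive ratio bound through an amortized local competitiveness argument. I will exhibit a nonnegative potential function $\Phi(t)$ with $\Phi(0) = \Phi(\infty) = 0$ and no positive jumps at job arrivals, such that the running condition
\begin{equation*}
\alg_t + \frac{d\Phi(t)}{dt} \leq c \cdot \opt_t
\end{equation*}
holds at all other times, where $\alg_t$ and $\opt_t$ denote the instantaneous costs (flow-time rate plus instantaneous power) incurred by the algorithm and by $\opt$ respectively, and $c = 6 + 12\Delta(1)/P(s^\star)$. Integrating then yields $\alg \leq c \cdot \opt$.

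First I quantify $\alg_t$ exactly using the algorithm's defining invariant: each of the $A(t)+1$ active servers runs at the common speed $s(t) = P^{-1}\bigl((n^1(t)+A(t)+1)/(A(t)+1)\bigr)$, so the aggregate power equals $n^1(t)+A(t)+1$. Combined with the outstanding-job count $n^1(t)+A(t)$, this yields $\alg_t = 2(n^1(t)+A(t))+1$. Hence the ALG side of the running condition must be able to absorb roughly twice the outstanding job count, which will ultimately be responsible for the leading additive constant $6$ in $c$.

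The central design step is the choice of $\Phi$. A natural form is a weighted per-stage discrepancy
\begin{equation*}
\Phi(t) = \beta \sum_{k=1}^{K} \sum_{j=1}^{\bigl(q_k^\alg(t)-q_k^\opt(t)\bigr)_+} P^{-1}(j),
\end{equation*}
where $q_k^\alg$ and $q_k^\opt$ are the per-server job counts, the weights $P^{-1}(j)$ are chosen so that the marginal cost of an extra unit of discrepancy is calibrated to the speed needed to work off an additional unit of power, and the tuning constant is $\beta = 12/P(s^\star)$. Arrivals enter both systems identically at server $1$ and create no jump in $\Phi$. Exploiting the algorithm's uniform-speed property together with its structural invariant $q_k^\alg(t) \leq 1$ for $k \geq 2$, the rate of decrease of $\Phi$ caused by ALG's service across the active servers should dominate $\alg_t$ up to the leading factor of $6$.

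The main obstacle will be the $\opt$ side of the running condition. When $\opt$ serves stage $k$ at instantaneous speed $\sigma$, the potential can rise at rate up to $\beta P^{-1}\bigl((q_k^\alg-q_k^\opt)_++1\bigr)\sigma$, and this must be majorized by a constant multiple of $P(\sigma)$ uniformly in both $\sigma$ and the current queue levels. This reduces to a Young-type inequality for the convex dual pair $(P,P^{-1})$, and the break-even speed $s^\star$ defined by $1+P(s^\star) = s^\star P'(s^\star)$ is precisely the value that optimizes the trade-off, producing the additive $12\Delta(1)/P(s^\star)$ contribution to $c$. A secondary difficulty is the cross-stage effect when $\opt$ transfers a job from stage $k$ to stage $k+1$, simultaneously perturbing adjacent terms of $\Phi$; I expect the structural bound $q_k^\alg \leq 1$ for $k \geq 2$ together with the monotonicity and convexity of $P^{-1}$ to suffice to close this accounting cleanly and deliver the stated competitive ratio.
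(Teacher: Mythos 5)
Your proposal misses the paper's central technical innovation and, as written, would not close.

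The paper is explicit that no potential function of the standard type (nonnegative, zero at the boundaries, \emph{no positive jump discontinuities}) is available here, precisely because internal transfers between tandem servers are not synchronized between the algorithm and $\opt$. Rather than avoiding positive jumps, the paper embraces them: it relaxes the second boundary condition, allowing $\Phi$ to jump upward at internal-transfer instants, and then separately bounds the \emph{total} upward jump by $2cnK\Delta(1)$ (Lemma~\ref{lem:jump}). This extra additive term is then converted to a multiple of $C_\opt$ using the crude per-job lower bound $C_\opt \geq nK\,P'(s^\star)$, which is exactly where the $12\Delta(1)/P'(s^\star)$ addend in the competitive ratio comes from. Your plan instead asserts the running inequality holds at essentially all non-arrival times and treats internal transfers as a ``secondary difficulty'' to be absorbed into the drift. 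But when the \emph{algorithm itself} moves a job from server $k$ to $k+1$, the term indexed by $k{+}1$ in your $\Phi$ increases discontinuously (even with $q_{k}^\alg \leq 1$); these are genuine upward jumps, not a rate-of-change that can be dominated by $c\cdot\opt_t$. Without a separate accounting of those jumps and an explicit bound relating their total to $C_\opt$, the integration step $\int(\alg_t + d\Phi/dt)\,dt \leq c\int\opt_t\,dt$ simply does not yield $C_A \leq c\,C_\opt$.

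There is a second, independent problem: your potential uses per-stage discrepancies $\bigl(q_k^\alg - q_k^\opt\bigr)_+$ for $k \geq 2$. The paper deliberately avoids any dependence on $\opt$'s queue at stages beyond the first. It does so by first replacing $\opt$ with an \emph{enhanced} $\opt$ that runs each job on all $K$ servers in parallel (Lemma~\ref{lem:enhOPT}), which is a valid lower bound, and whose cost is a function of $n_o^1$ only. After this reduction, the downstream potentials $\Phi_j^\alg$ for $j\geq 2$ contain no $\opt$ term at all, which is what makes the drift and jump analysis tractable. Your proposal has no analogue of this reduction, so you would have to control $\opt$'s transfer times across all stages, which is exactly the situation the paper declares ``far too challenging.'' Finally, a smaller issue: your weights $P^{-1}(j)$ do not match the paper's $\Delta(j) := P'(P^{-1}(j))$, and it is the latter that plugs into the Young-type bound (Lemma~\ref{lem:bansal}) to produce the $P(s_o) - \beta$ term in the drift; the potential also needs the $A(t)$-dependent scaling (through $f_{A(t)+1}$ and the $1/K$ normalization in $d^1$) so that the algorithm's uniform speed $P^{-1}\bigl((n^1+A+1)/(A+1)\bigr)$ provides enough negative drift per server without overspending power. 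None of these calibrations appear in your sketch.
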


Even though there has been large number of papers on online speed scaling algorithms with a single server or with multiple parallel servers, as far as we know, there is no work on competitive algorithms for a tandem server case for the objective of flow time plus energy. We would like to point that there is work on only energy efficient routing for networks \cite{andrews2010routing,bansal2012multicast} without any delay consideration. 

With a tandem network, the main technical difficulty in obtaining results for flow time plus energy with the arbitrary input case is that the external arrivals happen at the same time for any algorithm and the optimal offline algorithm $\opt$ into server $1$, but because of dynamic speed scaling, the internal arrivals at intermediate servers (departures from previous server) are not synchronized for any algorithm and the  $\opt$ (that has non-causal information about future job arrivals). Thus, a sample path result that is needed in the arbitrary input case is hard to obtain. 

We overcome this difficulty by proposing a potential function that has positive jumps (corresponding to movement of jobs in consecutive servers) in contrast to typical approach of using potential function that has no positive jumps. Consequently, to derive constant competitive ratio results, we upper bound the sum of the positive jumps and relate that to the cost of the $\opt$. Moreover, the potential function  is only a function of the number of jobs with the $\opt$ in the first server and not in any subsequent servers, since controlling and synchronizing the jobs of the algorithm and the $\opt$ in servers other than the first is challenging.
We show in Remark \ref{rem:nontrivialextension}, that a simple/natural extension of the the popular speed scaling algorithm \cite{bansal2009speedconf} for a single server, does not yield any useful bound on the competitive ratio with tandem servers.
Our result is similar in spirit to the results of \cite{gupta2010scalably,devanur2018primal} for parallel servers, where the competitive ratio also depends on $P(.)$. 
Compared to the prior work on speed scaling with single(parallel) server(s) \cite{bansal2009speedconf,gupta2010scalably,devanur2018primal}, we make a non-trivial extension (even though our results require unit sized jobs) and provide constant competitive ratio results for tandem servers, that has escaped analytical tractability for long. 

\subsubsection{Stochastic Input}
In the stochastic setting, we consider a more
  general tandem network, with a parallel bank of servers at each
  stage. The external arrivals to stage $1$ are assumed to follow a Poisson distribution. We consider a simple 'gated' static speed algorithm and random routing among different servers in each stage that  critically ensures that the job arrivals to subsequent stages are also Poisson \cite{Harchol2013}. 
  We show that the random routing and gated static speed policy has a constant competitive ratio that only depends on the power functions, and is independent of the workload and the number of servers. To contrast our work with prior work on stochastic control of tandem servers \cite{rosberg1982optimal,hajek1984optimal}, the novelty of our work is that we are able to give concrete (constant factor) competitive ratio guarantees, while in prior work only structural results were known that too in  the stochastic input setting.

\section{System Model}
Let the input consist of $n$ jobs, where job $j$ arrives (released at)
at time $a_j$ and has work/size $w_j^k$, to be completed on server $k$.  
There are $K$ homogenous
servers in series/tandem, each with the same power function $P(s)$,
 where $P(s)$ denotes the power consumed by any server while running
at speed $s$. Typically, $P(s)= s^\alpha$ with $2\le \alpha\le 3$. 
Each job has to be processed
by each of the $K$ servers, in sequence, i.e. server $k$ can process a job only after it has been completely processed by server $k-1$ and departed from it. Following most of the prior work in this area, we assume that each server has a large enough buffer and no jobs are dropped ever.

The speed $s$ is the rate at which work is executed by any of the
server, and $w$ amount of work is completed in time $w/s$ by any
server if run at speed $s$ throughout time $w/s$.  A job $j$ is
defined to be complete at time $f_j$ on server $k$ if $w_j^k$ amount of work has been
completed for it on server $k$.  
The flow time $F_j$ for
job $j$ is defined as $F_j = f_j-a_j$ ($f_j$ is the completion time of job $j$ on the last ($K^{th}$) server minus the arrival time)
and the overall flow time is $F= \sum_{j} F_j$. From here on we refer
to $F$ as just the flow time.  Note that $F = \int n(t) dt$, where
$n(t)$ is the number of unfinished jobs (spread across possibly different servers) at time $t$.
We denote the corresponding variables for the the optimal offline algorithm $\opt$ by a subscript or superscript $o$.

Let server $k$ run at speed $s_k(t)$ at time $t$. Then the energy cost for server $k$ is defined as $P(s_k(t))$, where $P(.)$ is strictly convex, non-decreasing, differentiable function at $s\in [0,\infty)$. 
Natural example of $P(x) = a+bx^\alpha, \alpha, a,b \ge 1$ clearly
satisfies all these conditions.  Following \cite{bansal2009speedconf},
these special conditions on $P$ can be relaxed completely, without
affecting the results, and more importantly work even when maximum
speed is bounded $s\in [0,\sfB]$ (see
Remark~\ref{rem:boundedspeed}). Total energy cost is
$\sum_{k=1}^KP(s_k(t))$ summed over the flow time.

Choosing larger speeds reduces the flow time, however, increases the energy cost, and the natural objective function that has been considered extensively in the literature is a linear combination of flow time and energy cost, which we define as 
\begin{equation}\label{eq:cost}
C =  \int n(t) dt + \int  \sum_{k=1}^KP(s_k(t))dt.
\end{equation} 
Note that there is no explicit need for considering the weighted combination of the two costs in \eqref{eq:cost} since a scalar multiple can be absorbed in the power function $P(.)$ itself. 

\section{Arbitrary Input}
Any online algorithm only has causal information, i.e., it becomes aware of job $j$ only at time $a_j$. 
Using only this causal information, any online algorithm has to decide at what speed each server should be run at at each time. Let the cost \eqref{eq:cost} of an online algorithm $A$ be $C_A$, and the cost for the $\opt$ that knows the job arrival sequence $\sigma$ (both $a_j$ and $w_j^k$) in advance be $C_{\opt}$. Then the competitive ratio of the online algorithm $A$ for $\sigma$ is defined as 
\begin{equation}\label{eq:compratio}
\sfc_A(\sigma) = \frac{C_A(\sigma)}{C_{\opt}(\sigma)},
\end{equation} 
and the objective function considered in this paper is to find an online algorithm that minimizes the worst case competitive ratio  
$
\sfc^\star = \min_A \max_{\sigma}  \sfc_A(\sigma)
$.

A typical approach in speed scaling literature to upper bound (of $c$) the competitive ratio is via the construction of a potential function $\Phi(t)$ and show 
that for any input sequence $\sigma,$
\begin{equation}\label{eq:mothereq} 
  n(t) + \sum_{k =1}^K P(s_k(t)) +  \frac{d\Phi(t)}{dt}  \le c(n_o(t)
  + \sum_{k=1}^K P(s^o_k(t))),
\end{equation} 
almost everywhere and that $\Phi(t)$ satisfies the following {\it
  boundary} conditions,
\begin{enumerate}
\item Before any job arrives and after all jobs are finished,
  $\Phi(t)= 0$, and
\item $\Phi(t)$ does not have a positive jump discontinuity at any
  point of non-differentiability.
\end{enumerate} 
Then, integrating \eqref{eq:mothereq} with respect to $t$, we get that
\begin{equation*}\label{eq:mothereq1} \left(\int n(t) + \sum_{k =1}^K
  P(s_k(t))\right) \le \int c\biggl(n_o(t) + \sum_{k =1}^K P(s^o_k(t))\biggr),
\end{equation*} 
which is equivalent to showing that $C_{A}(\sigma) \le c \
C_{\opt}(\sigma)$ for any input $\sigma$ as required. 

Since any online algorithm is only allowed to make causal decisions, thus at any time $t$, the speed chosen by an online algorithm $A$ for any server and the $\opt$ can be different. Because of this, 
the main challenge when there are tandem servers, is that the internal arrivals at server $k+1$ that corresponds to departures from  server $k$ ($k< K$ other than the last)  can happen at different times for the algorithm and the $\opt$. Thus constructing a potential function and ensuring that the boundary conditions are satisfied presents a unique challenge. With a single (or parallel bank) server such a problem does not arise since there, arrivals only happen externally at the same time for both the algorithm and the $\opt$. Thus, instead of finding a potential function that does not have a positive jump discontinuity, we propose a potential function for which  we can control how large the positive jump discontinuity and compare it with cost of the $\opt$. Let the new boundary conditions be,  \begin{enumerate}
\item Before any job arrives and after all jobs are finished,
  $\Phi(t)= 0$, and
\item Let $\Phi(t)$ increase by amount $D_j$ at the $j^{th}$ discontinuous point. Let $\sum_{j} D_j \le \sfD C_{\opt}$.
\end{enumerate} 
Then, integrating \eqref{eq:mothereq} with respect to $t$, we get that
\begin{align}\label{eq:mothereq22} 
C_A \le & \int \left(n(t) + \sum_{k =1}^K
  P(s_k(t))\right) dt + \int \frac{d\Phi(t)}{dt} dt\le \int c\biggl(n_o(t) + \sum_{k =1}^K P(s^o_k(t))\biggr)dt + \sfD C_{\opt},
\end{align} 
which is equivalent to showing that $C_{A}(\sigma) \le (c +D)\
C_{\opt}(\sigma)$ for any input $\sigma$ as required. 

The main novel contribution of this paper is the construction of a potential function for tandem server settings with positive jumps, where we can upper bound $\sfD$, and importantly which is only a function of the number of jobs with the $\opt$ on the first server (which arrive together for the algorithm as well) and not on subsequent servers, since controlling them is far too challenging.


{\bf Job sizes:} For a single server setting,  constant competitive algorithms have been derived independent of the job sizes \cite{bansal2009speedconf}. Considering arbitrary job sizes in a multiple tandem server setting is more complicated (technical difficulty is described in Remark \ref{rem:nontrivial}) and we consider homogenous job size setting, where all job sizes are identical across all jobs and all servers $w_j^k= w, \ \forall \ j, k$. Without loss of generality, we infact let $w=1$.


We here discuss briefly why it is non-trivial to extend the results for single server setting to tandem server setting.
\begin{remark}\label{rem:nontrivialextension} Let $w=1$.
Consider a $c$-competitive algorithm $A_c$ for a single server with equal job size, e.g. $c=3$ \cite{bansal2009speedconf} that chooses speed $s = P^{-1}(n+1)$, where $n$ is the number of outstanding jobs. There are two ways to use this in the 
tandem server setting. Let $n^i$ be the number of jobs on server $i$. Either we can replicate the speed of jobs as seen on server $1$ ($s_1 = P^{-1}(n^1+1)$) on server $2$, or use $s_i = P^{-1}(n^i+1), i=1,2$ autonomously on both the servers. We argue next that both these choices are not very useful.

{\bf Speed Replication:} Let job $j$ arrive at time $a_j$ and depart at $f_j$, and during this time the speed chosen by server $1$ to serve job $j$ be $s_j, t\in [f_j, a_j]$. 
Replicating the speed profile $s_j$ on the second server as well does not result in $2c$-competitive algorithm for the two-server problem. 
What can happen is that consider a time $t$ where a job $j$ starts its service at server $2$ and let that job $j$ was alone in server $1$ throughout the time it spent in server $1$, i.e. its speed profile $s_j = P^{-1}(2), t\in [f_j, a_j]$ \cite{bansal2009speedconf}. Let the next job $j+1$ arrive at $t=f_j$ into server $1$. 
The speed of job $j+1$ in server $1$ is $P^{-1}(2)$, and because of replication of $s_j$ on server $2$ for job $j$, job $j$ is also being processed at speed $P^{-1}(2)$. Let at $t^+$, $n >>1$ new jobs arrive in server $1$, because of which the speed of job $j+1$ is increased to $P^{-1}(n+2)$. Thus, with the $2$-server setting, job $j+1$ will be processed fast and will have to wait behind job $j$ in server $2$ since job $j$'s speed is fixed at $P^{-1}(2)$. Such a problem is avoided in a single server system since at time $t^+$ job $j$ has departed the system. Thus, with the two-server system, the cost for job $j+1$ could be more than two times compared to a single server system. 

{\bf Autonomous: } Consider an input, where $\ell > >1$ jobs of unit size arrive at time $0$ into server 
$1$. 
Then choosing $s_i = P^{-1}(n^i+1), i=1,2$, server $2$ runs slower compared to server $1$ until 
$\ell/2$ jobs have been processed by server $1$ and are available at server $2$. Thus, jobs start accumulating in server $2$'s queue, and consequently, each of $\ell$ jobs have to wait behind jobs in server $2$ for sufficient time before they are processed by server $2$, entailing a large flow time + energy cost. This argument on its own does not mean that the competitive ratio of this algorithm is poor, since the inherent cost could be large even with the $\opt$ with this input. However, for this input, instead a simple algorithm ($\opt$ can only do better) that chooses $s_i = P^{-1}(n^1+1)$ for both $i=1,2$ avoids any waiting for 
any job on server $2$ and can be shown to have at most twice the flow time + energy cost of the server $1$. 
Thus autonomous speed choice for two servers is also not expected to provide low (or constant) competitive ratio.


\end{remark}


\subsection{Speed Scaling Algorithm}\label{sec:hom}
We begin this section, by first deriving a lower bound on the cost of the $\opt$.
\begin{lemma}\label{lem:enhOPT} $C_{\opt} \ge C_{\opt-E}$, where 
$ C_{\opt-E} = \int \left(n^1_o(t) + \sum_{k=1}^K P(s^1(t))\right) dt$.
\end{lemma}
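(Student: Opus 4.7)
The plan is to obtain the lemma as a straightforward relaxation bound: show that $\opt$'s total cost is at least the cost you get by (i) counting only the outstanding jobs sitting at server~1 in the OPT system and (ii) keeping all of OPT's energy expenditure across every server. So $C_{\opt-E}$ should be read as a ``reduced cost'' that retains full energy but only the server-1 queue, and the goal is to certify $C_{\opt}\ge C_{\opt-E}$ pointwise inside the time integral.

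The key step is the trivial decomposition of the queue. Since the total number of unfinished jobs in OPT at time $t$ is the sum over servers,
\begin{equation*}
n_o(t) \;=\; \sum_{k=1}^{K} n^k_o(t) \;\ge\; n^1_o(t),
\end{equation*}
because every $n^k_o(t)\ge 0$. Plugging this into the cost \eqref{eq:cost} evaluated for OPT,
\begin{align*}
C_{\opt}
&= \int n_o(t)\,dt \;+\; \int \sum_{k=1}^{K} P(s^o_k(t))\,dt \\
&\ge \int n^1_o(t)\,dt \;+\; \int \sum_{k=1}^{K} P(s^o_k(t))\,dt \;=\; C_{\opt-E}.
\end{align*}
The energy term is unchanged on both sides, so the inequality reduces entirely to the pointwise bound on the queue sizes.

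There is no real obstacle here; the proof is a definition-chase and rests only on nonnegativity of queue lengths. The reason it merits being recorded as a lemma is that $C_{\opt-E}$ is the lower bound actually needed later in the amortised analysis: the proposed algorithm is engineered so that outside server~1 each server carries at most one outstanding job, hence its flow-time contribution will naturally be compared to $n^1_o(t)$ (which OPT controls on server~1) plus a term absorbed by OPT's energy. Using the cruder $n_o(t)$ would also work but would needlessly tighten what $\opt$ is held responsible for in the flow-time part of the potential-function argument; isolating $n^1_o(t)$ aligns the lower bound with the places where the potential function has a nonzero contribution.
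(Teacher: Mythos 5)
The proof does not establish the lemma as the paper means it, because you have misread what $C_{\opt-E}$ denotes. In your reading, $C_{\opt-E}$ is the cost of the \emph{true serial} $\opt$ with the queues at servers $2,\dots,K$ dropped, so the inequality reduces to the pointwise bound $n_o(t)\ge n^1_o(t)$ plus the unchanged energy term $\sum_k P(s^o_k(t))$. But that is not the quantity the paper defines: $C_{\opt-E}$ is the optimal cost of an \emph{enhanced (relaxed) problem} in which every arriving job is duplicated into $K$ copies processed in parallel, one on each server, and the job completes when all $K$ copies are done. By symmetry, the optimal enhanced policy runs all servers identically at a common speed $s^1(t)$ with a common queue $n^1_o(t)$, which is why the energy term in the lemma is $\sum_{k=1}^K P(s^1(t)) = K\,P(s^1(t))$ with a single speed $s^1$, not per-server speeds $s^o_k$. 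The tell is precisely this notational mismatch: you silently replaced the paper's $\sum_{k=1}^K P(s^1(t))$ by $\sum_{k=1}^K P(s^o_k(t))$, turning a nontrivial statement into a tautology. Read literally with $s^1$ meaning the true $\opt$'s server-1 speed, your inequality would assert $C_{\opt}\ge \int (n^1_o + K\,P(s^o_1))$, which is false in general (the true $\opt$ can run server~1 fast and the remaining servers slow), so that reading cannot be what the lemma intends.

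The substantive idea you are missing is the relaxation argument: any schedule feasible for the serial problem is also feasible for the enhanced problem with the same flow time and energy (the serial precedence constraints are just extra constraints that the schedule happens to satisfy, and the job's completion time $\max_k \tau^k_j = \tau^K_j$ is unchanged), hence the enhanced optimum is a lower bound on $C_{\opt}$. This enhancement is not a cosmetic simplification --- it is exactly what produces a benchmark that is \emph{symmetric across servers} (single queue $n^1_o$, single speed $s^1$), so that the potential function only needs to track $\opt$'s state at server~1, and it is what makes the later drift computation work: the bound coming from $K\,d\Phi_1/dt$ produces a $K\,cP(s_o)$ term, which is absorbable precisely because the enhanced $\opt$'s total energy is $K\,P(s^1)$, not $\sum_k P(s^o_k)$. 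Your crude pointwise bound $n_o\ge n^1_o$ certifies a different, weaker lower bound that does not support the downstream analysis.
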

\begin{proof}
We enhance the $\opt$ as follows to derive a lower bound on its cost. Instead of requiring that $\opt$ processes jobs in series, each incoming job is copied on all servers and a job is defined to be complete, when it is completed by all servers. Thus, allowing $\opt$ to run jobs in parallel. Essentially this will let $\opt$ run jobs at same speed in each of the servers, and have the same number of outstanding jobs on each server. Thus, for the enhanced $\opt$, the total cost (flow time + energy) is equal to $ C_{\opt-E} = \int \left(n^1_o(t) + \sum_{k=1}^K P(s^1(t))\right) dt$, where $n^1_o(t)$ is the number of outstanding jobs on server $1$. Thus, we have $C_{\opt} \ge C_{\opt-E}$.
\end{proof}

Next, we will compare the performance of the proposed algorithm and the enhanced $\opt$.
Let $n^k(t)$ and $n_o^k(t)$ the number of outstanding jobs on server
$k$ with the algorithm and the enhanced $\opt$ (which for succinctness call $\opt$ whenever there is no ambiguity), respectively at time $t$. For the enhanced $\opt$ we only need to consider the number of jobs on server $1$. 
At time $t$, let $n_o(t,q)$ be the number of unfinished
jobs with $\opt$ on the first server with remaining size at least~$q$, while $n^i(t,q)$ be the number of unfinished
jobs with the algorithm on server $i$ with remaining size at least~$q$. Thus, $n^i(t) = n^i(t, 0)$ and $n_o(t)
= n_o(t, 0)$.


For server $1$, let $d^1(t,q) = \max\left\{0, \frac{n^1(t,q) -
  n_o(t,q)}{K}\right\}$, while for server $j\ge 2$,
$$d^j(t,q)=   \sum_{k=2}^{j-1} n^k(t) + n^j(t,q),$$
where $\sum_{k=2}^{j-1} n^k(t)$ is  the total number of outstanding jobs from server $2$ till server $j-1$. Notably in defining $d^j(t,q)$ there is no contribution from the $\opt$ unlike in $d^1(t,q)$. This is key, since there is no way to control the number of jobs that the $\opt$ has in server $j\ge 2$ and their transitions between servers $j$ to $j+1$.

For
the algorithm, a server~$i$ is defined to be {\it active} if it has an
unfinished job, i.e., $n^i(t) > 0.$ The indicator function $A_i(t)=1$ for $i\ge 2$
if server~$i$ is active under the algorithm at time $t$ and $A_i(t)=0$
otherwise. Then $A(t) = \sum_{i=2}^K A_i(t)$ is the number of active
servers with the algorithm at time~$t$, other than server $1$.

For $i \in \N,$ let
$$f_a\left(\frac{i}{a}\right)-f_a\left(\frac{i-1}{a}\right) =
\Delta\left(\frac{i}{a}\right) :=
P'\left(P^{-1}\left(\frac{i}{a}\right)\right)$$ and $f_a(0) = 0$. 
For server $1$, let 
\begin{equation}
  \label{eq:phi1}
  \Phi_1(t) = 
c \int_{0}^1  f_1\left(d^1(t,q)\right)dq ,
\end{equation}
while for server $j\ge 2$,
\begin{equation}
  \label{eq:phij}
  \Phi_j(t) = 
 \Phi_1(t)  + \Phi_j^{\alg}(t)
\end{equation}
where $$\Phi_j^{\alg}(t)= c_j \int_{0}^1
f_{A(t)+1}\left(\frac{d^j(t,q)}{A(t)+1}\right) dq. $$ 

Consider the potential function
\begin{equation}\label{defn:phi}
  \Phi(t)=\sum_{j=1}^{K}\Phi_j(t),
\end{equation} 

{\bf Algorithm: } The speed scaling algorithm that we propose, chooses
the following speeds. For server $1$, 
\begin{equation}\label{eq:speeddef}
  s_1(t) = \begin{cases}   P^{-1}\left(\frac{n^1(t)+A(t)+1}  {A(t)+1}\right), & \text{if} \ n^1(t)> 0, \\
 0, &\text{otherwise.}\end{cases}
 \end{equation}
For active servers, i.e., servers  $i\ge 2$  with  $A_i=1$
 \begin{equation}
 s_i(t) = \begin{cases}   P^{-1}\left(\frac{n^1(t)+A(t)+1}  {A(t)+1}\right), & \text{if} \ n^1(t)> 0, \\
 P^{-1}(2), &\text{otherwise.}\end{cases}
 \end{equation}
 The non-active servers have speed $0$.

 With this speed scaling choice,
 all active servers work at the same speed at each time, and since we are assuming
 that each job has the same size on all servers, this implies that
 jobs only wait in server $1$ if at all, and are always in process at
 active servers $i>1$. Moreover, other than server $1$, all servers have at most $1$ outstanding job at any time. The speed choice ensures that the total power used is $n^1+A+1$ (or $2A$ if $n^1=0$) one more than the total number of outstanding jobs in the system.
 
{\bf Comments about the potential function:} The basic building blocks $\Phi_1$ and $\Phi_j^\alg$ of our proposed potential function are inspired by the potential function first constructed in \cite{bansal2009speedconf}, however, the non-trivial aspect is the choice of including $A(t)$ to define the $f$ function. Since $A(t)$ changes dynamically, the overall construction and analysis is far more challenging.

The proposed potential function $\Phi$ is rather delicate and is really the core idea for solving the problem. We discuss its important properties and reasons why a more natural choice does not work as discussed in Remark \ref{rem:nat}. To begin with, note that the denominator in $d^1(t,q)$ is fixed to be $K$ and not $A(t)$ which can change dynamically. This is important since $n^1(t)$ can be arbitrarily large, and a decrease in $A(t)$ can have an arbitrarily large increase in $\Phi_1(t)$. However for $d^j(t,q)$ which is function of $A(t)$, even when $A(t)$ decreases, the increase in $\Phi_j(t)$ can be bounded since $n^j(t)\le 1$ (choice made by the proposed algorithm) and $\sum_{j\ge 2} n^j(t) \le A(t)$. The choice of potential function is also peculiar since $\Phi_1(t)$ is spread over all the $K$ servers with a normalization factor of $K$ (as defined in $d^1(t,q)$). This is needed since the algorithm prescribes an identical speed 
of $P^{-1}\left(\frac{n^1(t)+A(t)+1}{A(t)+1}\right)$ for all the servers, and to get sufficient negative drift from the $d\Phi_1(t)/dt$ term, it is necessary that $P\left(P^{-1}\left(\frac{n^1(t)+A(t)+1}{A(t)+1}\right)\right) \ge n^1/K$, which is true since $A(t) \le K-1$. If instead we just keep one term for $\Phi_1(t)$ in $\Phi(t)$ without the normalization by $K$ in $d^1(t,q)$, the speed of each server has to be at least $P^{-1}(n^1(t))$ to get sufficient negative drift from the $d\Phi_1(t)/dt$ term, however, that makes the total power used $\sum_{j=1}^nP^{-1}(s_j) = K n^1(t)$, which is order wise too large.

\begin{remark}\label{rem:nat} The considered potential function \eqref{eq:phij} for server $j$ is not a natural choice. Instead it should really be $$\Phi_j(t) = c_j \int_{0}^1
f_{A(t)+1}\left(\frac{ n^1(t) + \sum_{k=2}^{j-1} (n^k(t))+ n^j(t,q) }{A(t)+1}\right) dq,$$
by combining the arguments of $\Phi_1$ and $\Phi^{\alg}_j$ into a single $f$ function. 
This choice avoids the increase in $\Phi_j$ when a job moves from server $k$ to $k+1$ unlike \eqref{eq:phij}, since in this case $n^k(t^+) = n^k(t)-1$, while $n^{k+1}(t^+,q) = n^{k+1}(t,q)+1$ cancelling each other off. 
This, however, makes controlling the increase in $\Phi_j(t)$ when $A(t)$ decreases, since $n^1(t)$ can be arbitrarily large. 
The current choice \eqref{eq:phij} avoid this bottleneck by isolating server $1$ from all the other subsequent servers by keeping the terms of server $1$ and subsequent servers \eqref{eq:phij} separate, however, at a cost of incurring positive jumps whenever jobs move from server $k$ to $k+1$ which can be bounded.
\end{remark}
\begin{remark}
To eliminate the need for considering different epochs at which the job transition happens between server $k$ and server $k+1$ with the algorithm and the $\opt$ which can result in increase in the potential function, one can consider a following equivalent model. Let on (external) arrival of a new job $j$ to server $1$ at time $t$, $K$ jobs are created with sizes $w$, and the $k^{th}$ copy with size $w$ is sent to the $k^{th}$ server at time $t$. To model the tandem server constraint, a {\bf precedence} constraint can be enforced such that any copy of any job cannot start its processing at server $k$ unless it has been processed (served and departed) at the server $k-1$. The precedence constraint, however, brings in a new feature unlike the single server case, that the servers can idle even when they have outstanding jobs, if those jobs have not been processed by preceding servers, which needs to be handled carefully.

Following \cite{bansal2009speedconf}, a natural choice for the potential function with this alternate model is 
$\Phi_1(t) =  c_1\int_{0}^\infty f\left(d^1(t,q)\right) dq$, and 
$\Phi_k(t) =  c_k\int_{0}^\infty f\left(d^k(t,q)\right) dq$, and consider the potential function 
$
\Phi(t) = \Phi_1(t)+\sum_{k=2}^K\Phi_k(t)$, 
where  $d^1(t,q) = \max\left\{0, n^1(t,q) - n^1_o(t,q)\right\}$, and 
 \begin{align*}
 d^k(t,q)& = \max\left\{0, (n^k(t)(t,q) - n^{k-1}(t,q)) -  (n^k_o(t,q) - n^{k-1}_o(t,q))) \right\}.\end{align*}
 and $f(0) = 0$, and $\forall \ i\ge 1$, $ f(i)-f(i-1) = \Delta(i) :=
P'(P^{-1}(i))$ (this means $P'(x)$ where $x=P^{-1}(i)$).
To get the correct negative drift with this potential function, however, requires $c_k > c_{k+1}$ because of the 'back flow' (terms of type $n^k(t,q) - n^{k-1}(t,q)$ in $\Phi_k$ which increase the potential function $\Phi_k$ when the algorithm is working on server $k-1$) making $c_1\ge K$, and since the competitive ratio at least $c_i$ for all $i$, the resulting competitive ratio turns out to be $K$.
\end{remark}

From here on we work towards proving Theorem \ref{thm:main}. The first step in that direction is to bound the increase in the potential function $\Phi(t)$ at discontinuous points, which is done as follows.
  
\begin{lemma}\label{lem:jump}
  Taking $c_j = c$ for all $j\ge 2$, the total increase in $\Phi(t)$ at
  points of discontinuity is at most $2 c nK \Delta(1)$.
\end{lemma}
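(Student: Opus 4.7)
The plan is to enumerate the points of discontinuity of $\Phi(t)$ and bound the positive jump at each. These points fall into three families: (i) external arrivals at server $1$, (ii) transitions of a job from server $k$ to server $k+1$ with $1 \le k < K$, and (iii) departures from server $K$. The first step is to dispose of the harmless events. External arrivals cause $n^1(t,q)$ and $n_o(t,q)$ to rise simultaneously by $1$ on all of $q \in [0,1]$, leaving $d^1$ invariant; $A$ and the $n^m$ for $m \ge 2$ are untouched, so no term of $\Phi$ jumps. For a middle transition $k \to k+1$ with $2 \le k \le K-1$, the activations $A_k : 1 \to 0$ and $A_{k+1} : 0 \to 1$ cancel in $A$, and the changes $-1$ in $n^k$ and $+1$ in $n^{k+1}$ cancel inside the sum $\sum_{m=2}^{j-1} n^m$ for every $j \ge k+2$; for $j = k+1$ the same cancellation happens against $n^j(t,q)$, and for $j = k$ the only change to $d^j$ is concentrated at $q=0$ (measure zero). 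So middle transitions also contribute no jump.

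This leaves only $1 \to 2$ transitions (where $A$ increases by $1$) and $K$-departures (where $A$ decreases by $1$). At a $1 \to 2$ transition, $\Phi_1$ is unchanged by the same measure-zero argument, and for each $j \ge 2$ the quantity $d^j(t,q)$ rises uniformly by $1$ on $q \in [0,1]$. The plan is to bound the resulting jump in $\Phi_j^{\alg}$ using the key inequality
\[
f_{a+2}\!\left(\tfrac{d+1}{a+2}\right) - f_{a+1}\!\left(\tfrac{d}{a+1}\right) \;\le\; \Delta(1), \qquad 0 \le d \le a,
\]
which follows telescopically by writing the left-hand side as $\Delta((d+1)/(a+2)) + \sum_{l=1}^d [\Delta(l/(a+2)) - \Delta(l/(a+1))]$ and observing that the sum is $\le 0$ by monotonicity of $\Delta$. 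Summing over $j = 2, \ldots, K$, this bounds the total jump at a $1 \to 2$ transition by $c(K-1)\Delta(1)$.

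The harder case is the $K$-departure, where $d^j$ is essentially unchanged but the subscript and denominator of $f$ both shift. The per-$q$ jump is controlled by
\[
f_a\!\left(\tfrac{d}{a}\right) - f_{a+1}\!\left(\tfrac{d}{a+1}\right) \;\le\; \Delta(1), \qquad 0 \le d \le a,
\]
which I expect to be the main technical obstacle. The proof I have in mind writes the left-hand side as $\sum_{l=1}^d \int_{l/(a+1)}^{l/a} \Delta'(x)\,dx$ and swaps the order of summation and integration. The crucial counting observation is that for $x \le 1$, at most one interval $[l/(a+1), l/a]$ with $l \le a$ contains $x$: two consecutive intervals indexed by $l$ and $l+1$ would need $(l+1)/(a+1) \le l/a$, i.e., $a \le l$, contradicting $l \le a-1$. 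Hence the double sum collapses to at most $\int_0^{d/a} \Delta'(x)\,dx = \Delta(d/a) \le \Delta(1)$. Summing over $j$ gives a jump of at most $c(K-1)\Delta(1)$ per $K$-departure.

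Finally, since each of the $n$ jobs contributes exactly one $1 \to 2$ transition and one $K$-departure, there are $2n$ jump-causing events, each of magnitude at most $c(K-1)\Delta(1) \le cK\Delta(1)$. The total positive jump is therefore at most $2cnK\Delta(1)$, matching the claim.
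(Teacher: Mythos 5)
Your proof is correct and follows essentially the same approach as the paper: enumerate the sources of discontinuity of $\Phi$, show that external arrivals and $\opt$-side events contribute nothing, and bound each remaining jump by $\Delta(1)$ per component $\Phi_j^{\alg}$. The two minor points of difference are (i) you sharpen the enumeration by showing explicitly that middle transitions $k \to k+1$ with $2 \le k \le K-1$ cause no jump at all (the paper lumps all transitions together and asserts that each $\Phi_j^{\alg}$ suffers at most $n$ jumps, which amounts to the same observation stated less explicitly), and (ii) you prove the two key one-step inequalities directly, via monotonicity of $\Delta$ and a disjoint-interval integral argument, whereas the paper isolates the same facts as the two parts of Lemma~\ref{lemma:f_prop2} and composes them with the one-step identity $f_{a+1}\left(\frac{d+1}{a+1}\right) - f_{a+1}\left(\frac{d}{a+1}\right) = \Delta\left(\frac{d+1}{a+1}\right)$. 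One small caveat on your integral argument: the paper only assumes $P$ differentiable, so $\Delta'$ need not exist; however your interval-disjointness observation yields the same bound by pure telescoping of the increasing function $\Delta$ (using $l/a \le (l+1)/(a+1)$ for $l \le a$), so nothing is lost.
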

Proof  can be found in 
Appendix~\ref{app:lem:jump}. 
\begin{remark}\label{rem:nontrivial} The restriction of equal job sizes is essentially needed to prove Lemma \ref{lem:jump}. 
Since all server speeds are identical, if job sizes are different, jobs will accumulate in servers other than $1$, making it hard to control the increase in $\Phi(t)$ when $A(t)$ decreases.
\end{remark}
\begin{definition}\label{defn:r}
Let $r_i=\ell$, if $i$ is the $\ell^{th}, \ell \in [1:A(t)]$ active server (in increasing order of server index) among the $A(t)$ active servers.
\end{definition} 
The proof of Theorem~\ref{thm:main} is based on the following bounds
on the potential function drift. 
\begin{lemma}
\label{lem:bansalphi1}
Consider any instant $t$ when no arrival/departure (including internal
transfers) occurs under the algorithm or $\opt.$ 
For server $1$, 
\begin{align*}
  d\Phi_1/dt\le \begin{cases} cP(s_o) - c\frac{(n_1-n_o)}{K} & \text{if} \ n_o < n^1,
 \\ 0 &\text{if} \  n_o > n^1.
 \end{cases}
 \end{align*}
 If $n_o(t) = n^1(t)$, then either of the above two cases arise. Moreover,  for any active
server~$i\ge 2$ at time $t$,
\begin{align*}
  d\Phi_i^\alg/dt\le 
 - A_i(t) c_i \frac{r_i}{A(t)+1},
\end{align*}
where $A_i(t)=1$ when server $i\ge 2$ is active and zero otherwise. 
\end{lemma}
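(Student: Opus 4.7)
The plan is to compute $d\Phi_1/dt$ and $d\Phi_j^{\alg}/dt$ separately at a generic instant $t$ with no arrivals, departures, or internal transfers, and then assemble the two bounds. Both computations follow the BCP-style ``tagged boundary'' technique: between events, the only locations where the integrands $f_1(d^1(t,q))$ and $f_{A+1}(d^j(t,q)/(A+1))$ change in $t$ are the remaining-size boundaries of the currently served jobs, which recede at rates equal to the corresponding server speeds. Differentiating under the integral then reduces the drift to a finite sum of boundary contributions, each a finite difference of $f_1$ or $f_{A+1}$ scaled by the relevant speed.

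For $\Phi_1$, only two boundaries matter: the algorithm's tagged boundary $q_1^{\star}$ on server~$1$ (receding at rate $s_1$) and $\opt$'s tagged boundary $q_o^{\star}$ on server~$1$ (receding at rate $s_o$). When $n_o(t) < n^1(t)$, crossing $q_1^{\star}$ decreases $d^1$ by $1/K$, so the algorithm contributes a negative drift proportional to $-c\, s_1 \, [f_1(d_1) - f_1(d_1 - 1/K)]$ with $d_1 := d^1(t, q_1^{\star-})$; crossing $q_o^{\star}$ increases $d^1$ by $1/K$, so $\opt$ contributes a positive drift proportional to $+c\, s_o \, [f_1(d_o + 1/K) - f_1(d_o)]$. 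I would bound $\opt$'s positive contribution by $c P(s_o)$ via a Young/convex-duality type inequality of the form $s_o \cdot P'(s') \leq P(s_o) + s' P'(s') - P(s')$, applied at $s' := P^{-1}$ of the relevant level. The algorithm's speed choice $s_1 = P^{-1}((n^1 + A + 1)/(A+1))$, together with $sP'(s) \geq P(s)$ (from convexity and $P(0) = 0$) and the inequality $P(s_1) \geq n^1/K$ (which uses $A+1 \leq K$), then leaves enough residual negative drift to yield the $-c(n^1 - n_o)/K$ term. When $n_o(t) > n^1(t)$, the integrand is identically zero almost everywhere, so $d\Phi_1/dt \leq 0$ trivially; the equality case $n_o(t) = n^1(t)$ is handled by taking limits from whichever side the dynamics is about to move.

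For $\Phi_j^{\alg}$ with $j \geq 2$ active, only the tagged boundary $q_j^{\star}$ on server~$j$ itself contributes, because each $n^k(t)$ with $k < j$ is piecewise constant between events and $\opt$ does not appear in $d^j$. The algorithm's invariant $n^k(t) \leq 1$ for $k \geq 2$ gives $d^j = r_j$ just below $q_j^{\star}$ and $d^j = r_j - 1$ just above, so the drift equals $-c_j s_j \Delta(r_j/(A+1))$ with $s_j = s_1$. Since $n^1 \geq 0$ and $A + 1 \geq r_j$, we have $s_1 \geq s' := P^{-1}(r_j/(A+1))$, and the chain $s_1 P'(s') \geq s' P'(s') \geq P(s') = r_j/(A+1)$ delivers the claimed $-c_j r_j/(A+1)$ bound. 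The main obstacle is the balance for $\Phi_1$: the normalization by $K$ (rather than by $A+1$) in $d^1$ is precisely what makes $P(s_1) \geq n^1/K$ regardless of how small $A(t)$ is at the instant, so that after absorbing $\opt$'s positive drift via the Young-type inequality, the leftover negative drift matches $-c(n^1 - n_o)/K$ uniformly over $A(t)$. Making this cancellation rigorous while threading the discrete-step structure of $f_1$ and carefully relating the local quantities $d_1$ and $d_o$ to the bulk difference $(n^1 - n_o)/K$ is where the real technical care will be needed.
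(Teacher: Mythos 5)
Your proposal is correct and follows essentially the same route as the paper: differentiate $\Phi_1$ and each $\Phi_j^{\alg}$ under the integral via the receding tagged boundaries (the standard BCP technique), absorb $\opt$'s positive contribution with the convexity/Young inequality $(s_o - s')P'(s') \le P(s_o) - P(s')$ (which is exactly the paper's Lemma~\ref{lem:bansal} unwrapped), and exploit the speed choice to get $P(s_1) \ge (n^1 - n_o)/K$ from $A+1 \le K$ for the $\Phi_1$ drift, and $s_1 \ge P^{-1}(r_j/(A+1))$ together with $s'P'(s') \ge P(s')$ for the $\Phi_j^{\alg}$ drift with $d^j(t,q_j^{-}) = r_j$. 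The only cosmetic difference is that for the $\Phi_j^{\alg}$ term you write out the elementary chain $s_1 P'(s') \ge s'P'(s') \ge P(s')$ rather than citing the BCP lemma with $\tilde s = 0$, but the content is identical.
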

Proof  can be found in 
Appendix~\ref{app:lem:drift}. 

  Next, we consider the cost of the algorithm at any time $t$, and suppress $(t)$ for simplicity.
  When $n_o > n^1$ and $n^1\ne 0$, then $d\Phi_1/dt= 0$ (Lemma \ref{lem:bansalphi1}),  
   and since $\sum_{i=2}^K\frac{r_i}{A+1} \ge A/2$, the `running' cost \eqref{eq:mothereq} from Lemma \ref{lem:bansalphi1} with $c_j=c, \ \forall \ j\ge 2$ is 
   \begin{align*} n^1+A + \sum_{k=1}^K P(s_i(k)) +d\Phi/dt & \le n^1+A + n^1+A +1- c(A/2)  \le 3n_o,
   \end{align*} choosing $c=6$.
   If $n^1=0$, where each active server other than server $1$ has speed $P^{-1}(2)$, then the running cost  
    \begin{align*} A + \sum_{k=1}^K P(s_i(k)) +d\Phi/dt & \le A + 2A - c(A/2) \le0,
   \end{align*} choosing $c=6$.
  When $n_o < n^1$, then $d\Phi_1/dt \le P(s_o) - c\frac{(n_1-n_o)}{K}$ using Lemma \ref{lem:bansalphi1}. Moreover, from Lemma \ref{lem:bansalphi1} with $c_j=c, \ \forall \ j\ge 2$, where $\sum_{i=2}^K\frac{r_i}{A+1} \ge A/2$, the running cost,  
  \begin{align*} 
 n^1+A + \sum_{k=1}^K P(s_i(k)) +d\Phi/dt 
 &\le n^1+A + n^1+A + 1+ \sum_{k=1}^K \left(cP(s_o) - \frac{c(n_1-n_o)}{K}\right) - c(A/2), \\
 & \le c n_o + (2-c) n^1 +A(2-c/2) + 1+ c\sum_{k=1}^K P(s_o),\\
 & \le 6n_o + 6\sum_{k=1}^K P(s_o),
 \end{align*} choosing $c=6$. 
 Thus, in both cases, accounting for the discontinuities from Lemma \ref{lem:jump} with $c_j=c=6$ for all $j$ since the first boundary condition is trivially met, 
 $$\int n^1+A + \sum_{k=1}^K P(s_i(k)) +d\Phi/dt \le 6\left(\int \left(n_o+ \sum_{k=1}^K P(s_o) \right)dt\right) + 12\Delta(1)(nK),$$
 which implies that 
 \begin{equation}\label{eq:finalbound}C_A \le 6C_{\opt-E}+  12\Delta(1)nK.
 \end{equation}

%


Now we complete the Proof of Theorem \ref{thm:main}.
\begin{proof} From \eqref{eq:finalbound}
\begin{align}\label{eq:res1}
C_A \le 6 C_{\opt-E} +12 \Delta(1) n K.
\end{align}
Recall that $C_{\opt-E}\le C_{\opt}$.
For any job with size $w$, the
minimum cost incurred by $\opt$ on processing it on any one server is
$\min_s \frac{w}{s}+\frac{wP(s)}{s}$, where $s$ is the speed. Thus,
the optimal $s^\star$ satisfies $1+P(s^\star)= s^\star P'(s^\star)$,
and the optimal cost is $wP'(s^\star)$.  With $n$ jobs arriving
each with size $1$ which have to be processed by each of the $K$
servers, a simple lower bound on the cost of $\opt^1$ is $nKP'(s^\star)$.
This
implies from \eqref{eq:res1} that
\begin{align}\label{eq:res11}
C_A & \le 6C_{\opt} + \left(12/P'(s^\star)\right) \Delta(1) C_{\opt} =C_{\opt}(6+\left(12/P'(s^\star)\right)\Delta(1)).
\end{align}
For $P(s) = s^2$, $s^\star
=1$ and the minimum cost is $P'(s^\star)=2$, and $\Delta(1)=2$, thus $C_A \le 18C_{\opt}$.

\end{proof}

\section{Stochastic setting}

In this section, we move from the worst case setting to the stochastic
setting, where the workload is specified by stochastic processes and
we evaluate algorithms based on their performance in steady state. We
find that the stochastic setting is `easier' than the worst case
setting; specifically, we show that a naive routing strategy coupled
with a simple ON/OFF (gated static) speed selection is constant
competitive. Crucially, the competitive ratio depends only on the
power functions, and not on the statistical parameters of the workload
or the topology of the queueing system. Moreover, the tandem system we
consider in this section is more general that the one considered
before---each job needs to be served in $K$ tandem layers/phases,
where each layer~$i$ is composed of $m_i$ parallel servers.

Formally, our system model is as follows. The service system is
composed of $K$ tandem layers of servers, with $m_i$ parallel and
identical servers in layer~$i.$ Jobs arrive to layer $1$ according to a Poisson
process with rate $\lambda.$ The jobs have to be processed
sequentially in the $K$ layers (by any server in each layer) before
exiting the system. Moreover, we assume that each server is equipped
with an (infinite capacity) queue, so that once a job completes
service in layer~$i,$ $1 \leq i \leq K-1,$ it can be immediately
dispatched to any server in layer~$i+1.$ The service requirement in
layer~$i$ is exponentially distributed with mean $1/\mu_i.$ Job
scheduling on each server is assumed to be blind to the service
requirements of waiting jobs. The power function for all servers in
layer~$i$ is $P_i(s) = c_i s^{\alpha_i},$ where $c_i > 0,$
$\alpha_i > 1.$

The performance metric is given by $$C = \Exp{T} + \Exp{E},$$ where
$T$ and $E$ denote, respectively, the response time and energy
consumption associated with a job in steady state. We note that the
performance metric can be also be expressed as the sum of the costs
incurred in each layer:
$$C = \sum_{i = 1}^K \left( \Exp{T_i} + \Exp{E_i}\right).$$
Here, $T_i$ denotes the steady response time in layer~$i,$ and $E_i$
denotes the steady state energy consumption (per job) in
layer~$i.$\footnote{We implicitly restrict attention to the class of
  policies that admit these stationary averages.}

The proposed algorithm ($A$) is the following. When a job arrives into
layer $i,$ we dispatch it to a random server in layer~$i,$ chosen
uniformly at random. The speed of each server in layer $i$ is set in a
\emph{gated static} fashion as $s_i^A = 1 + \frac{\rho_i}{m_i}$ when
active (and zero when idle), where $\rho_i := \frac{\lambda}{\mu_i}$
is the offered load to layer $i.$ Note that the speed selection
requires knowledge (via learning if not available) of the offered load
into each layer (unlike the dynamic speed scaling algorithm in
\eqref{eq:speeddef}. This boils down to learning the arrival rate and
the mean service requirement, which is feasbile in the stochastic
workload setting considered here. Under the proposed random routing
and gated static speed selection, the system operates as a
(feedforward) Jackson network, with each server operating as an M/M/1
queue \cite{Harchol2013}. Thus, the arrival process for each layer is also Poisson.

Our main result is the following. Let $[K] := \{1,2,\cdots,K\}.$
\begin{theorem}
  \label{thm:stochastic:constant_comp}
  The algorithm $A$ is constant competitive, with a competitive ratio
  that depends on only the power functions, i.e., on
  $\bigl((c_i,\alpha_i):\ i \in [K]\bigr)$. Specifically, the
  competitive ratio does not depend the workload parameters $\lambda,$
  $(\mu_i, i \in [K])$, the number of layers $K,$ or on the number of
  servers in the different layers $(m_i, i \in [k]).$
\end{theorem}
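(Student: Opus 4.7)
The plan is to decompose the steady state cost additively across layers as $C = \sum_i (\Exp{T_i} + \Exp{E_i})$, analyse the algorithm's per-layer cost $C_i^A$ explicitly via the Jackson network structure, derive two per-layer lower bounds on the optimal per-layer cost $C_i^{\opt}$, and combine them to show the ratio $C_i^A/C_i^{\opt}$ is bounded uniformly by a constant $\kappa_i$ depending only on $(c_i,\alpha_i)$. Summing then gives the theorem with competitive ratio at most $\max_i \kappa_i$.

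\textbf{Algorithm side.} Under uniform random routing, Burke's theorem plus Bernoulli splitting (the Jackson network property already invoked in the text) make every server in layer $i$ an M/M/1 queue with Poisson arrivals of rate $\lambda/m_i$ and service rate $s_i^A\mu_i$. The value $s_i^A=1+\rho_i/m_i$ is tailored so that $s_i^A\mu_i-\lambda/m_i=\mu_i$, giving immediately $\Exp{T_i}=1/\mu_i$. Multiplying the active power $P_i(s_i^A)$ by the per-server utilisation $\rho_i/(m_i+\rho_i)$ and dividing by throughput $\lambda/m_i$ gives $\Exp{E_i}=P_i(s_i^A)/(s_i^A\mu_i)=c_i(1+\rho_i/m_i)^{\alpha_i-1}/\mu_i$, so
\[
C_i^A \;=\; \frac{1+c_i(1+\rho_i/m_i)^{\alpha_i-1}}{\mu_i}.
\]

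\textbf{OPT side.} I would prove two independent lower bounds on $C_i^{\opt}$. (i) \emph{Ignoring queueing}: for any (possibly preemptive, time-varying) policy serving a job of work $w$ in layer $i$, let $\tau$ be the total active duration. Jensen applied to the convex $P_i$ gives energy $\ge \tau P_i(w/\tau)$, so active-time plus energy is at least $\tau(1+P_i(w/\tau))= w(1+P_i(u))/u$ with $u=w/\tau$; minimising over $u$ yields $wP_i'(s_i^\star)$, where $s_i^\star$ solves $1+P_i(s_i^\star)=s_i^\star P_i'(s_i^\star)$. Since response time dominates active time and $\Exp{w}=1/\mu_i$, we get $C_i^{\opt}\ge P_i'(s_i^\star)/\mu_i$. (ii) \emph{Work-conservation}: in steady state the time-averaged sum of per-server speeds in layer $i$ must equal $\rho_i$. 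Applying Jensen to $P_i$ jointly across the $m_i$ servers at each instant and then over time gives energy rate $\ge m_i P_i(\rho_i/m_i)$, hence $C_i^{\opt}\ge \Exp{E_i^{\opt}}\ge c_i(\rho_i/m_i)^{\alpha_i-1}/\mu_i$.

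\textbf{Combining.} Writing $x := \rho_i/m_i$ and cancelling $\mu_i$, the two bounds combine to
\[
\frac{C_i^A}{C_i^{\opt}} \;\le\; \frac{1+c_i(1+x)^{\alpha_i-1}}{\max\bigl\{P_i'(s_i^\star),\ c_i x^{\alpha_i-1}\bigr\}},
\]
a continuous function of $x\in[0,\infty)$ depending only on $(c_i,\alpha_i)$, which tends to $(1+c_i)/P_i'(s_i^\star)$ as $x\to 0$ and to $1$ as $x\to\infty$ (since $(1+1/x)^{\alpha_i-1}\to 1$), and is therefore uniformly bounded by some $\kappa_i(c_i,\alpha_i)$. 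Because per-job cost decomposes additively across layers for every policy, $C^{\opt}\ge \sum_i C_i^{\opt}$, and so $C^A=\sum_i C_i^A \le (\max_i \kappa_i)\,C^{\opt}$, independently of $\lambda$, the $\mu_i$, the $m_i$, and $K$. The main obstacle I anticipate is the rigorous justification of lower bound (ii): one must show that no sophisticated, job-aware, preemptive, time-varying policy can beat the symmetric static bound $m_iP_i(\rho_i/m_i)$ on the long-run energy rate. I would handle this by applying Jensen to $P_i$ over the $m_i$ servers at each time, then over time, together with the steady state identity that the time-averaged total speed equals $\rho_i$ (a pure work-conservation statement, not a specifically Poisson fact).
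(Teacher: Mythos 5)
Your proposal follows essentially the same route as the paper: per-layer cost decomposition, explicit M/M/1 analysis for the algorithm's cost, the single-job-in-isolation lower bound (your bound $P_i'(s_i^\star)/\mu_i$ equals the paper's $c_i^{1/\alpha_i}\alpha_i(\alpha_i-1)^{1/\alpha_i-1}/\mu_i$), the Jensen/work-conservation energy lower bound, and then a per-layer ratio bound. The obstacle you flagged for lower bound (ii) is handled in the paper exactly as you propose --- Jensen over the $m_i$ servers and over time combined with $\sum_j \Exp{S_{i,j}}=\rho_i$ --- the only cosmetic difference being that the paper writes out the explicit constant $\frac{1+c_i 2^{\alpha_i-1}}{\min\bigl(c_i^{1/\alpha_i}\alpha_i(\alpha_i-1)^{1/\alpha_i-1},\,c_i\bigr)}$ whereas you argue boundedness via continuity and limits in $x=\rho_i/m_i$.
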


The proof of Theorem~\ref{thm:stochastic:constant_comp} can be found
in Appendix~\ref{app:stochastic}.

\section{Proof of Lemma \ref{lem:jump}}
\label{app:lem:jump}
There are 4 possible ways that can give rise to a discontinuity in
$\Phi(\cdot).$ 
\begin{enumerate} 
\item \emph{Job arriving at server $1$ at time $t$.} On arrival of a new job which happens only on server $1$,  
both $n^1(t,q)$, and
  $n_o(t,q)$ increase by $1$ for all $q\in [0,1]$. Hence, there is no change to the $\Phi(t)$ in this case.

\item \emph{Transfer of jobs between servers under the algorithm
    (without departure from server $K$) at time $t.$} 
  For each job transitioning from server $i$ to $i+1$ for $i\ge 1$, there is potentially a positive jump in $\Phi(t)$ because  of either increase in $n^i(t)$ or $n^i(t,q)$ for $i\ge 2$. In particular, for $n$ jobs, there are at most $n$ jumps in $\Phi^\alg_j(t)$ for $j=2, \dots, K$, with each jump of size at most 
\begin{align*}
    \Phi_j^\alg(t^+) - \Phi_j^\alg(t) &= c_j \int_{0}^1
    \left\{f_{A(t)+1}\left(\frac{d^j(t^+,q)}{A(t)+1}\right) -
      f_{A(t)+1}\left(\frac{d^j(t,q)}{A(t)+1}\right) \right\} dq, \\
    &\leq c_j \int_{0}^1
    \left\{f_{A(t)+1}\left(\frac{d^j(t,q)+1}{A(t)+1}\right) -
      f_{A(t)+1}\left(\frac{d^j(t,q)}{A(t)+1}\right) \right\} dq, \\
    &= c_j \Delta\left(\frac{d^j(t,q)+1}{A(t)+1} \right), \\
    &\leq c_j \Delta(1),
  \end{align*}
  since $d^j(t,q)\le A(t)$.
  Counting for at most $nK$ such jumps, the total increase in $\Phi(t)$ is $\Delta(1)n\sum_{j=2}^K c_j$. Note that for each jump either $A(t)$ remains same or increases by $1$. In the above bounding we have taken the worst case, when $A(t)$ remains the same. If $A(t)$ increases by $1$, then the same bound follows using second part of Lemma \ref{lemma:f_prop2}.
  Note that transfer of jobs between servers under the $\opt$ without any departure from server $K$ has no effect on $\Phi(t)$.

%
%
%

\item \emph{Job departing from server $K$ under algorithm at time
    $t.$} 
  We consider two subcases. If $n^1(t) \leq 1,$ then $A(t^+) =
  A(t)-1.$ In this case, 
  \begin{align*}
    \Phi_j^\alg(t^+) - \Phi_j^\alg(t) &\leq c_j \int_{0}^1
    \left\{f_{A(t)}\left(\frac{d^j(t^+,q)}{A(t)}\right) -
      f_{A(t)+1}\left(\frac{d^j(t,q)}{A(t)+1}\right) \right\} dq \\
    & \stackrel{(a)}\leq c_j \int_{0}^1
    \left\{f_{A(t)+1}\left(\frac{d^j(t^+,q)+1}{A(t)+1}\right) -
      f_{A(t)+1}\left(\frac{d^j(t,q)}{A(t)+1}\right) \right\} dq \\
    &= c_j \Delta\left(\frac{d^j(t^+,q)+1}{A(t)+1} \right) \\
    &\stackrel{(b)}\leq c_j \Delta(1).
  \end{align*}
  Here, ($a$) follows from first part of Lemma~\ref{lemma:f_prop2}, while ($b$) is a
  consequence of: $$d^j(t^+,q) \leq A(t).$$

  On the other hand, if $n^1(t) > 1,$ then $A(t^+) = A(t).$ In this
  case, it is easy to see that $\Phi_j(t^+) - \Phi_j(t) \leq 0.$

  Thus, the departure of a job from the system under the algorithm can
  result in an upward jump in $\Phi(\cdot)$ of at most $\Delta(1)
  \sum_{j = 1}^K c_j.$ Choosing $c_j=c$ for all $j$, the total increase in $\Phi(t) \le n\Delta(1)\sum_{j = 1}^K c_j$.

\item \emph{Completion of jobs by $\opt.$} Any job completed by $\opt$ on server changes $n_o(q)$ only for $q=0$ thus, keeping the integral to define $\Phi_i(t)$ unchanged for all $i$.

\end{enumerate}

\begin{lemma}
  \label{lemma:f_prop2}
  For $a,d \in \N$  where $d \leq a,$ $$f_{a}\left(\frac{d}{a}\right) \leq f_{a+1}\left(\frac{d+1}{a+1}\right).$$

  Moreover, for $a,d \in \N,$  $$f_a\left(\frac{d}{a}\right) \geq f_{a+1}\left(\frac{d}{a+1}\right).$$ 
\end{lemma}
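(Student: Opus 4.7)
The plan is to unwind the definition of $f_a$ as a telescoping sum. Setting $g(x) := P'(P^{-1}(x))$, the recursion $f_a(i/a) - f_a((i-1)/a) = \Delta(i/a) = g(i/a)$ together with $f_a(0) = 0$ gives
$$f_a(d/a) = \sum_{i=1}^{d} g(i/a), \qquad f_{a+1}(d'/(a+1)) = \sum_{i=1}^{d'} g(i/(a+1)).$$
Two structural facts about $g$ are all that is needed: (i) $g$ is non-decreasing on $[0,\infty)$, because $P$ is convex (so $P'$ is non-decreasing) and strictly convex plus non-decreasing forces $P^{-1}$ to be well-defined and increasing; and (ii) $g \ge 0$ there, because $P$ is non-decreasing. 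Once the sums are in this form, both inequalities reduce to index comparisons.

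For the second inequality, since $i/a \ge i/(a+1)$ for every $i \ge 1$, monotonicity of $g$ gives $g(i/a) \ge g(i/(a+1))$ term-by-term, and summing $i = 1,\dots,d$ yields $f_a(d/a) \ge f_{a+1}(d/(a+1))$ immediately.

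For the first inequality, I would exploit the algebraic identity $(i+1)/(a+1) - i/a = (a-i)/(a(a+1))$, so $(i+1)/(a+1) \ge i/a$ precisely when $i \le a$. Under the hypothesis $d \le a$ this holds for all $i \in \{1,\dots,d\}$, so by monotonicity of $g$,
$$\sum_{i=1}^{d} g(i/a) \;\le\; \sum_{i=1}^{d} g\!\left(\tfrac{i+1}{a+1}\right) \;=\; \sum_{j=2}^{d+1} g\!\left(\tfrac{j}{a+1}\right) \;\le\; \sum_{j=1}^{d+1} g\!\left(\tfrac{j}{a+1}\right) \;=\; f_{a+1}\!\left(\tfrac{d+1}{a+1}\right),$$
the last step using $g(1/(a+1)) \ge 0$. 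The only subtle point, and the reason the hypothesis $d \le a$ cannot be dropped, is precisely the restriction $i \le a$ needed to guarantee $(i+1)/(a+1) \ge i/a$; if $d > a$ the index-shift argument fails for the terms with $i > a$. Beyond this, the proof is a routine expansion.
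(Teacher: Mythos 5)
Your proof is correct and follows essentially the same route as the paper: both expand $f_a(d/a)$ as the telescoping sum $\sum_{j=1}^d \Delta(j/a)$, compare term-by-term using monotonicity of $\Delta = P'\circ P^{-1}$, and for the first inequality shift the index and drop the nonnegative $j=1$ term. The only difference is that you spell out the arithmetic $(i+1)/(a+1) - i/a = (a-i)/(a(a+1))$ that justifies the term-by-term comparison and shows precisely where the hypothesis $d \le a$ enters, a step the paper leaves implicit.
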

\begin{proof}
  To prove the first statement, we note that
\begin{align*}
f_{a}\left(\frac{d}{a}\right) &= \sum_{j = 1}^d \Delta(j/a) \\
& \stackrel{(a)}{\leq} \sum_{j = 1}^d \Delta(j+1/a+1) = \sum_{j = 2}^{d+1} \Delta(j/a+1) \\
& \leq \sum_{j = 1}^{d+1} \Delta(j/a+1) = f_{a+1}\left(\frac{d+1}{a+1}\right).
\end{align*}
Here, $(a)$ follows from the monotonicity of $\Delta(\cdot).$ The second statement of the lemma is trivial: 
$$f_a\left(\frac{d}{a}\right) = \sum_{j = 1}^d \Delta(j/a) \geq \sum_{j = 1}^d \Delta(j/a+1) = f_{a+1}\left(\frac{d}{a+1}\right).$$
\end{proof}


\section{Proof of Lemma \ref{lem:bansalphi1}}\label{app:lem:drift}
Our proofs will require the following technical Lemma from
\cite{bansal2009speedconf}.
\begin{lemma}\label{lem:bansal}[Lemma 3.1 in~\cite{bansal2009speedconf}]
  For $s, {\tilde s}, \beta \ge 0$, then for any function $P$ that is
  strictly increasing, strictly convex, and differentiable,
\begin{align*}
\Delta(\beta)(-s + {\tilde s}) \le& \left(-s +P^{-1}(\beta)\right)P^{'}(P^{-1}(\beta)) 
 + P({\tilde s}) -i.
\end{align*}
\end{lemma}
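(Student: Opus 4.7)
The inequality to prove is an algebraic/analytic statement about a strictly convex, strictly increasing, differentiable function $P$. Reading the right-hand side of the claim, the trailing ``$-i$'' should be ``$-\beta$'' (this is the standard Bansal--Chan--Pruhs form), so the target reads
\[
\Delta(\beta)(-s+\tilde s)\;\le\;\bigl(-s+P^{-1}(\beta)\bigr)P'\bigl(P^{-1}(\beta)\bigr)+P(\tilde s)-\beta .
\]
My plan is to reduce this to the standard first-order convexity inequality for $P$ by a change of variable.

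The first step is to introduce $x:=P^{-1}(\beta)$, so that $\beta=P(x)$ and, by definition, $\Delta(\beta)=P'(P^{-1}(\beta))=P'(x)$. Substituting these identities on both sides rewrites the target as
\[
P'(x)(-s+\tilde s)\;\le\;(-s+x)P'(x)+P(\tilde s)-P(x).
\]
Next, I would cancel the $-sP'(x)$ term appearing on both sides, and then move $xP'(x)$ to the left. After these elementary rearrangements the inequality collapses to
\[
P'(x)(\tilde s-x)\;\le\;P(\tilde s)-P(x).
\]

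This is precisely the supporting-line (first-order) characterization of convexity for the differentiable convex function $P$: the graph of $P$ lies above any tangent line. Since $P$ is assumed strictly convex and differentiable on $[0,\infty)$, and $x=P^{-1}(\beta)\ge 0$ and $\tilde s\ge 0$ both lie in the domain, this inequality holds for every choice of $\tilde s$ and $x$. Reversing the substitution $x=P^{-1}(\beta)$ then recovers the statement of the lemma.

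There is essentially no obstacle: the only non-trivial ingredient is the standard convexity inequality, and the rest is bookkeeping. The one place to be careful is simply in verifying that the variable $s$ drops out cleanly in the cancellation, which confirms that the lemma is an identity in $s$ combined with the convexity bound in the pair $(\tilde s, x)$, and that the hypotheses $s,\tilde s,\beta\ge 0$ together with the domain of $P$ ensure $P^{-1}(\beta)$ is well-defined and non-negative.
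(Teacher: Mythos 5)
Your proof is correct. Note first that the paper itself gives no proof of this lemma: it is imported verbatim (typo included) as Lemma~3.1 of the cited reference \cite{bansal2009speedconf}, so there is no in-paper argument to compare against. Your reading of the trailing ``$-i$'' as a typo for ``$-\beta$'' is confirmed by the way the lemma is actually invoked in the drift computations (see \eqref{eq:dummy456} and \eqref{eq:dummy457}, where the final term is $-\beta$ resp.\ $-\beta_\alg$). With that correction, your derivation is exactly the standard one: setting $x=P^{-1}(\beta)$ so that $\beta=P(x)$ and $\Delta(\beta)=P'(x)$, the terms $-sP'(x)$ cancel on both sides and the claim collapses to the gradient inequality $P'(x)(\tilde s-x)\le P(\tilde s)-P(x)$, which is the first-order characterization of convexity of $P$; strict convexity and strict monotonicity are not even needed beyond guaranteeing that $P^{-1}(\beta)$ is well defined on the relevant domain. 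Your closing observation that the inequality is an identity in $s$ (i.e., $s$ plays no role) is also a correct and useful sanity check. No gaps.
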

Proof of Lemma \ref{lem:bansalphi1}.
\begin{proof}
  Since the statement of the lemma applies to a fixed (though generic)
  time~$t,$ we shall omit the reference to $t$ throughout this proof
  for notational simplicity. Let $q_i$ and $q_o$ be the size of the job under process at server $i$ with the algorithm, and with the $\opt$ on server $1$, respectively. Recall that the speed of all active servers with the algorithm is $s_i = P^{-1}\left(\frac{n^1+A+1}{A+1}\right)$, while the speed of server $1$ with the $\opt$ is $s_o$.
  
The main idea of bounding $d\Phi/dt$ is similar to \cite{bansal2009speedconf} being specialized for this potential function and the speed choice.

Case 1: If $n_o > n^1$, then we first show that $d\Phi_1/dt\le0$. Note that under this condition, 
$n_o(q) > n^1(q)$ for $q\in [q_o-s_o dt, q_o]$. Thus, at time $t+dt$, 
 $n_o(q)$ is still at least as much as  $ n^1(q) $ for $q\in [q_o-s_o dt, q_o]$. Therefore, $\Phi_1$ does not increase because of processing by $\opt$. Since processing by the algorithm can only reduce $\Phi_1$, thus,  $d\Phi_1/dt\le 0$.

Case 2: If $n_o < n^1$ and $n^1 > 0$ (since otherwise again $d\Phi_1/dt=0$). Because of processing of jobs by the algorithm and the
  $\opt$, $d\Phi_1/dt$ changes because of reduction in $n^1(q)$
  (because of the algorithm) and $n_o(q)$ (because of the $\opt$). Then for the algorithm, $n^1(q)$
  decreases by $1$ for $q \in [q^1 - s^1 dt, q^1]$, and the
  contribution in $d\Phi_1/dt$ because of the algorithm is
  \begin{align}\label{eq:dummy1}
    - c \Delta \left( \frac{ n^1(q^1) -1 + n_o(q^1)}{K}\right)s_i,
\end{align}
where $\Delta$ has been defined after \eqref{defn:phi}.

Similarly, for the $\opt$ $n_o(q)$ decreases by $1$ for $q \in [q_o
- s_o dt, q_o]$, and the contribution in $d\Phi_1/dt$ because
of the $\opt$ is
\begin{align}\label{eq:dummy2}
  c \Delta \left( \frac{ n^1(q_o) - n_o(q_o)+1}{K}\right) s_o.
\end{align}
As shown in \cite{bansal2009speedconf}, that the argument of $\Delta(\cdot)$ is equal
in \eqref{eq:dummy1} and \eqref{eq:dummy2}. 
Combining
\eqref{eq:dummy1} and \eqref{eq:dummy2}, we get that
\begin{align}
d\Phi_1/dt & =  c \Delta \left(\frac{n^1 - n_o}{K}\right)(-s_1 + s_o) \label{eq:dummy3}
\end{align} or 
\begin{align}
d\Phi_1/dt & =  c \Delta \left(\frac{n^1 - n_o+1}{K}\right)(-s_1 + s_o),  \label{eq:dummy309}
\end{align}
depending on whether $q_1 > q_o$ or otherwise.
For either case, we apply technical Lemma \ref{lem:bansal}, to bound RHS of
\eqref{eq:dummy3} (similar bound will work for \eqref{eq:dummy309} as well). Setting $\beta = \frac{n^1 - n_o}{K},$ using Lemma \ref{lem:bansal}, note that
\begin{align}\nn
\Delta \left(\beta\right)(-s^i + s_o)& \le  \left(-s^i +P^{-1}(\beta)\right)P^{'}(P^{-1}(\beta)) + P(s_o) -\beta, \\ 
& \stackrel{(a)}\le  P(s_o) -\beta, \label{eq:dummy456}
\end{align}
where $(a)$ follows from the speed definition \eqref{eq:speeddef} $s_1 = P^{-1}\left(\frac{n^1+A+1}{A+1}\right)$
which ensures that $P(s^i) \geq \beta$, since $A \le K-1$. Thus, 
\begin{align}
d\Phi_1/dt & =  c ( P(s_o) -\beta). \label{eq:dummy33}
\end{align}
If $n^1=n_o$ then either  $d\Phi_1/dt=0$ or  \eqref{eq:dummy33} applies similar to \cite{bansal2009speedconf}.

Now we bound the $d\Phi_i^{\alg}/dt$ for $i\ge 2$, which is easier, since there is no $\opt$ component in them. Clearly,  $d\Phi_i^{\alg}/dt=0$ when server $i$ is inactive. So next we consider when server $i$ is active.
 Let the size of the job being processed by server $i$ with the algorithm be $q^i$. Then for the algorithm, $n^i(q)$
  decreases by $1$ for $q \in [q^i - s_i dt, q^i]$, and the
  contribution in $d\Phi_i^{\alg}/dt$ because of the algorithm is
  \begin{align}\nn
   d\Phi_i^{\alg}/dt & \le  - c_i \Delta \left( \frac{ \sum_{k=2}^{i-1} n^k + n^i(q^i)}{A+1}\right)s_i, \\ \label{eq:dummy100}
   &=- c_i \Delta \left( \frac{ \sum_{k=2}^{i-1} n^k + n^i}{A+1}\right),
\end{align}
where $\Delta$ has been defined after \eqref{defn:phi}.
Now, we apply technical Lemma \ref{lem:bansal}, to bound RHS of
\eqref{eq:dummy100}. Setting $\beta_\alg = \frac{\sum_{k=2}^{i}
  n^k }{A+1},$ using Lemma \ref{lem:bansal}, note that
\begin{align}\nn
\Delta \left(\beta\right)(-s^i)& \le  \left(-s^i +P^{-1}(\beta_\alg)\right)P^{'}(P^{-1}(\beta_\alg)) -\beta_\alg, \\ 
& \stackrel{(a)}\le   -\beta_\alg, \label{eq:dummy457}
\end{align}
where $(a)$ follows from the speed definition \eqref{eq:speeddef}
which ensures that $P(s^i) \geq \beta_\alg$ (even if $n^1=0$ since $\sum_{i=2}^Kn^i\le A$). Thus, using the Definition \eqref{defn:r} of $r_i$, we get 
\begin{align*}
   d\Phi_i^{\alg}/dt & \le  - c_i \frac{r_i}{A+1}\end{align*}
which completes the proof.
\end{proof}

\begin{remark}
  \label{rem:boundedspeed}
  If the set of allowable speeds is bounded, i.e., $[0,\sfB],$ where
  $P(\sfB) > 1,$ the statement of Theorem~\ref{thm:main} holds as is. The
  main points of difference are in the proof of
  Lemma~\ref{lem:bansalphi1}, in the two applications of
  Lemma~\ref{lem:bansal}. The first application (see
  \eqref{eq:dummy456}) holds due to the justification in
  \cite{bansal2009speedconf}. Since $\beta_\alg\le 1$, the second application (see
  \eqref{eq:dummy457}) requires $P(s^i) \geq \beta_\alg,$ which holds
  so long as $P(\sfB) > 1.$
\end{remark}

\section{Proof of Theorem~\ref{thm:stochastic:constant_comp}}
\label{app:stochastic}

Let $C_i = \Exp{T_i} + \Exp{E_i}$ denote the cost associated with
layer~$i.$ We prove the Theorem via deriving two lower bounds as follows.

\begin{lemma}
  \label{lemma:lb1}
  For any policy, 
  $$\lambda C_i \geq c_i \frac{\rho_i^{\alpha_i}}{m_i^{\alpha_i - 1}}.$$
\end{lemma}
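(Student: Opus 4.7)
The plan is to ignore the response time contribution entirely and obtain the bound purely from energy, since $\Exp{T_i} \geq 0$ gives $\lambda C_i \geq \lambda \Exp{E_i}$. The lower bound should then follow from two applications of Jensen's (power-mean) inequality, combined with the fact that throughput in steady state equals the offered load.

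First I would set up notation for an arbitrary (work-conserving or not) policy. Let $s_{i,j}(t)$ denote the instantaneous speed of server $j$ in layer $i$ at time $t$, taken to be $0$ whenever that server is idle. Let $\bar{s}_{i,j} := \lim_{T \to \infty} \frac{1}{T}\int_0^T s_{i,j}(t)\,dt$ (assumed to exist for any policy admitting stationary averages, as the problem statement restricts to). Then, by the ergodic/PASTA-style averaging and the fact that in steady state the total work cleared per unit time must equal the arrival rate of work into layer~$i,$ namely $\lambda/\mu_i = \rho_i,$ I get the conservation relation
\begin{equation*}
  \sum_{j=1}^{m_i} \bar{s}_{i,j} \;=\; \rho_i.
\end{equation*}
Similarly, the steady-state energy consumption per unit time in layer~$i$ is $\sum_{j=1}^{m_i} \overline{P_i(s_{i,j})},$ where the bar again denotes a time average, and by Little's/renewal reward applied per job this equals $\lambda \Exp{E_i}$:
\begin{equation*}
  \lambda \Exp{E_i} \;=\; c_i \sum_{j=1}^{m_i} \overline{s_{i,j}^{\alpha_i}}.
\end{equation*}

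Next I would invoke Jensen's inequality in time, using $\alpha_i > 1$ so that $x \mapsto x^{\alpha_i}$ is convex: for each server $j,$
\begin{equation*}
  \overline{s_{i,j}^{\alpha_i}} \;\geq\; \bigl(\bar{s}_{i,j}\bigr)^{\alpha_i}.
\end{equation*}
Then I would apply the power-mean inequality across the $m_i$ parallel servers to get
\begin{equation*}
  \sum_{j=1}^{m_i} \bigl(\bar{s}_{i,j}\bigr)^{\alpha_i}
  \;\geq\; m_i \left(\frac{1}{m_i}\sum_{j=1}^{m_i} \bar{s}_{i,j} \right)^{\alpha_i}
  \;=\; \frac{\rho_i^{\alpha_i}}{m_i^{\alpha_i - 1}},
\end{equation*}
where the final equality uses the conservation relation above. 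Chaining these inequalities gives $\lambda \Exp{E_i} \geq c_i \rho_i^{\alpha_i}/m_i^{\alpha_i-1},$ and since $\lambda C_i \geq \lambda \Exp{E_i},$ the claim follows.

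The only mildly delicate step is the passage from the instantaneous speeds $s_{i,j}(t)$ to their time averages and the identification of $\sum_j \bar{s}_{i,j}$ with $\rho_i;$ this rests on the assumption, stated in the paper, that we restrict to policies under which the relevant stationary averages exist, together with work conservation in steady state. The convex-analytic portion of the argument is then completely routine.
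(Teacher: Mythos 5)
Your proposal is correct and matches the paper's argument essentially step for step: drop the response-time term, use Jensen's inequality (your ``Jensen's in time'' is the paper's step $(a)$) to pass to mean speeds, then convexity of $P$ across the $m_i$ servers (your power-mean step is the paper's step $(b)$) together with the conservation relation $\sum_j \Exp{S_{i,j}} = \rho_i$. The only difference is cosmetic---you phrase things in terms of time averages while the paper uses stationary expectations.
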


\begin{proof}
  This lower bound is obtained using considering only the energy
  cost. Let $S_{i,j}$ denote the steady state speed of server~$j$ in
  layer~$i.$
\begin{align*}
  \lambda C_i &\geq \lambda \Exp{E_i}
              =\sum_{j = 1}^{m_i} \Exp{P(S_{i,j})}
              \stackrel{(a)}\geq \sum_{j = 1}^{m_i} P(\Exp{S_{i,j}})
              \stackrel{(b)}\geq \sum_{j = 1}^{m_i} P(\rho_i/m_i)
              =c_i \frac{\rho_i^{\alpha_i}}{m_i^{\alpha_i - 1}}.
\end{align*}
Here, $(a)$ follows by applying Jensen's inequality. $(b)$ is a
consequence of the convexity of $P(\cdot)$ along with
$\sum_{j=1}^{m_i} \Exp{S_{i,j}} = \rho_i.$
\end{proof}

\begin{lemma}
  \label{lemma:lb2}
  For any policy, 
  $$\lambda C_i \geq c_i^{1/\alpha_i} \rho_i \alpha_i (\alpha_i - 1)^{1/\alpha_i - 1}.$$
\end{lemma}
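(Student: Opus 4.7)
The plan is to derive the lower bound by optimizing the combined per-job service-time and energy cost, ignoring queueing delay entirely, and then invoking linearity of expectation.

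First, I would focus on a single (typical) job $j$ served in layer $i$, with (random) service requirement $X_j \sim \mathrm{Exp}(\mu_i)$. Let $\tau_j$ denote the total time during which the layer-$i$ server actually processes job $j,$ and let $E_j$ denote the energy expended by that server on job $j.$ Since response time is at least processing time, $T_j \ge \tau_j,$ and therefore
\begin{equation*}
  C_i \;=\; \Exp{T_i} + \Exp{E_i} \;\ge\; \Exp{\tau_j + E_j}.
\end{equation*}
Letting $s(t)$ be the server's speed while serving job $j,$ we have $X_j = \int s(t)\,dt,$ $\tau_j = \int dt,$ and $E_j = \int P_i(s(t))\,dt,$ where the integrals are over the processing intervals.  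Jensen's inequality applied to the convex power function $P_i$ yields $E_j \ge \tau_j P_i(X_j/\tau_j).$ Setting $s_j := X_j/\tau_j$ (the average processing speed), this gives
\begin{equation*}
  \tau_j + E_j \;\ge\; \frac{X_j}{s_j} + \frac{X_j P_i(s_j)}{s_j} \;=\; X_j\cdot \frac{1+P_i(s_j)}{s_j}.
\end{equation*}

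Next, I would minimize the per-unit-work cost $(1+P_i(s))/s = (1 + c_i s^{\alpha_i})/s$ over $s>0.$ Setting the derivative to zero gives $c_i(\alpha_i-1)s^{\alpha_i} = 1,$ so $s^\star = (c_i(\alpha_i-1))^{-1/\alpha_i},$ and plugging back in yields minimum value $\frac{\alpha_i}{\alpha_i-1}(c_i(\alpha_i-1))^{1/\alpha_i} = c_i^{1/\alpha_i}\alpha_i(\alpha_i-1)^{1/\alpha_i-1}.$  Therefore, sample-path-wise,
\begin{equation*}
  \tau_j + E_j \;\ge\; X_j\cdot c_i^{1/\alpha_i}\alpha_i(\alpha_i-1)^{1/\alpha_i-1}.
\end{equation*}

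Finally, taking expectations and using $\Exp{X_j} = 1/\mu_i,$
\begin{equation*}
  C_i \;\ge\; \Exp{\tau_j + E_j} \;\ge\; \frac{1}{\mu_i}\cdot c_i^{1/\alpha_i}\alpha_i(\alpha_i-1)^{1/\alpha_i-1}.
\end{equation*}
Multiplying both sides by $\lambda$ and using $\rho_i = \lambda/\mu_i$ gives the claim. There is essentially no obstacle here---this is the ``intrinsic per-job cost'' lower bound, analogous to the one used at the end of the proof of Theorem~\ref{thm:main} (where the optimization $1 + P(s^\star) = s^\star P'(s^\star)$ appears); the only care needed is in justifying that $\tau_j$ and $E_j$ really are obtained as integrals over the processing intervals of job $j,$ so that Jensen's inequality applies to produce the per-job bound before expectation.
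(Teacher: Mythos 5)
Your proof is correct and follows essentially the same route as the paper: both derive the lower bound from the intrinsic per-job cost of processing a job in isolation, by optimizing the constant-speed trade-off between processing time and energy. Your write-up is more careful than the paper's (which asserts $\lambda C_i \geq \min_{s>0} \rho_i/s + \rho_i P_i(s)/s$ with only a one-line justification): you explicitly supply the sample-path bound $T_j \geq \tau_j$ and the Jensen-inequality step reducing an arbitrary speed profile to the constant-speed case before taking expectations, which is exactly what is needed to make the paper's inequality rigorous.
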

\begin{proof}
  This lower bound comes from optimizing the cost of serving a single
  job in isolation. Indeed, 
  \begin{align*}
    \lambda C_i &\geq \min_{s > 0} \frac{\rho_i}{s} + \frac{\rho_i P(s)}{s}.
  \end{align*}
  The first term above is ($\lambda$ times) the delay cost, and the
  second is ($\lambda$ times) the energy cost. The above optimization
  can be solved explicitly, yielding the statement of the lemma.
\end{proof}

We are now ready to prove
Theorem~\ref{thm:stochastic:constant_comp}. Let $C_i^A$ denote the
cost associated with the proposed algorithm. This is given by
\begin{align}\nn
  \lambda C_i^A &= \frac{\rho_i}{s_i^A - \rho_i/m_i} + \frac{\rho_i}{s_i^A} P_i(s_i^A), \\ \label{eq:dummy800}
  &= \rho_i + c_i \rho_i \left(1+\frac{\rho_i}{m_i} \right)^{\alpha_i-1}. 
\end{align}
The above expressions follow since layer~$i$ receives arrivals as per
a Poisson process (this is a consequence of Burke's
theorem~\cite{Harchol2013}), which is further split into $m_i$
independent Poisson streams feeding into the $m_i$ servers in
layer~$i.$ Indeed, the steady state mean response time in layer~$i$
equals $\frac{1}{\mu_i(s_i^A - \rho_i/m_i)},$ and $\lambda$ times the
energy per job equals the stationary power consumption, given by the
second term in \eqref{eq:dummy800}.

Following \eqref{eq:dummy800} and Lemma \ref{lemma:lb1} and Lemma
\ref{lemma:lb2}, we get
\begin{align*}
  \frac{C_i^A}{C_i^*} & \leq \frac{\rho_i + c_i \rho_i \left(1+\frac{\rho_i}{m_i} \right)^{\alpha_i-1}}
                      {\max(c_i^{1/\alpha_i} \rho_i \alpha_i (\alpha_i - 1)^{1/\alpha_i - 1}
                      ,c_i \frac{\rho_i^{\alpha_i}}{m_i^{\alpha_i - 1}})}, \\
  & \leq \frac{\rho_i + c_i \rho_i \left(1+\frac{\rho_i}{m_i} \right)^{\alpha_i-1}}
    {\min \bigl(c_i^{1/\alpha_i} \alpha_i (\alpha_i - 1)^{1/\alpha_i - 1},c_i \bigr)
    \max\bigl(\rho_i,\frac{\rho_i^{\alpha_i}}{m_i^{\alpha_i - 1}}\bigr)}, \\
  &\leq \frac{1 + c_i 2^{\alpha_i-1}}{\min \bigl(c_i^{1/\alpha_i} \alpha_i (\alpha_i - 1)^{1/\alpha_i - 1},c_i \bigr)} =: c_i.
\end{align*}
Finally, we can bound the overall cost of the proposed algorithm as
follows.
\begin{align*}
  C^A = \sum_{i = 1}^K C_i^A \leq \sum_{i = 1}^K c_i C_i* \leq \left(\max_{1 \leq i \leq K} c_i\right) C^*.   
\end{align*}

\bibliography{../refs,refsextra}

\begin{thebibliography}{10}

\bibitem{albers2007unitsizeenergy}
Susanne Albers and Hiroshi Fujiwara.
\newblock Energy-efficient algorithms for flow time minimization.
\newblock {\em ACM Transactions on Algorithms (TALG)}, 3(4):49, 2007.

\bibitem{albers2014unitsizespeed}
Susanne Albers, Fabian M{\"u}ller, and Swen Schmelzer.
\newblock Speed scaling on parallel processors.
\newblock {\em Algorithmica}, 68(2):404--425, 2014.

\bibitem{SpeedScalingOptFairRobust}
Lachlan L.~H. Andrew, Minghong Lin, and Adam Wierman.
\newblock Optimality, fairness, and robustness in speed scaling designs.
\newblock In {\em Proc. ACM SIGMETRICS}, pages 37--48, New York, NY, 14-18 Jun
  2010.
\newblock URL:
  \url{http://users.monash.edu/~lachlana/pubs/SpeedScalingOptFairRobust.pdf}.

\bibitem{andrews2010routing}
Matthew Andrews, Antonio~Fern{\'a}ndez Anta, Lisa Zhang, and Wenbo Zhao.
\newblock Routing for energy minimization in the speed scaling model.
\newblock In {\em INFOCOM, 2010 Proceedings IEEE}, pages 1--9. IEEE, 2010.

\bibitem{asghari2014energy}
Naser~M Asghari, Michel Mandjes, and Anwar Walid.
\newblock Energy-efficient scheduling in multi-core servers.
\newblock {\em Computer Networks}, 59:33--43, 2014.

\bibitem{bansal2009weighted}
Nikhil Bansal and Ho-Leung Chan.
\newblock Weighted flow time does not admit o (1)-competitive algorithms.
\newblock In {\em Proceedings of the twentieth annual ACM-SIAM symposium on
  Discrete algorithms}, pages 1238--1244. SIAM, 2009.

\bibitem{bansal2008schedulingunitsize}
Nikhil Bansal, Ho-Leung Chan, Tak-Wah Lam, and Lap-Kei Lee.
\newblock Scheduling for speed bounded processors.
\newblock In {\em International Colloquium on Automata, Languages, and
  Programming}, pages 409--420. Springer, 2008.

\bibitem{bansal2009speedconf}
Nikhil Bansal, Ho-Leung Chan, and Kirk Pruhs.
\newblock Speed scaling with an arbitrary power function.
\newblock In {\em Proceedings of the twentieth annual ACM-SIAM symposium on
  discrete algorithms}, pages 693--701. Society for Industrial and Applied
  Mathematics, 2009.

\bibitem{bansal2012multicast}
Nikhil Bansal, Anupam Gupta, Ravishankar Krishnaswamy, Viswanath Nagarajan,
  Kirk Pruhs, and Cliff Stein.
\newblock Multicast routing for energy minimization using speed scaling.
\newblock In {\em Design and Analysis of Algorithms}, pages 37--51. Springer,
  2012.

\bibitem{bansal2009speedunitsize}
Nikhil Bansal, Kirk Pruhs, and Cliff Stein.
\newblock Speed scaling for weighted flow time.
\newblock {\em SIAM Journal on Computing}, 39(4):1294--1308, 2009.

\bibitem{devanur2018primal}
Nikhil~R Devanur and Zhiyi Huang.
\newblock Primal dual gives almost optimal energy-efficient online algorithms.
\newblock {\em ACM Transactions on Algorithms (TALG)}, 14(1):5, 2018.

\bibitem{gebrehiwot2017near}
Misikir~Eyob Gebrehiwot, Samuli Aalto, and Pasi Lassila.
\newblock Near-optimal policies for energy-aware task assignment in server
  farms.
\newblock In {\em Cluster, Cloud and Grid Computing (CCGRID), 2017 17th
  IEEE/ACM International Symposium on}, pages 1017--1026. IEEE, 2017.

\bibitem{ghosh1990ergodic}
MK~Ghosh and Steven~I Marcus.
\newblock Ergodic control of markov chains.
\newblock In {\em Decision and Control, 1990., Proceedings of the 29th IEEE
  Conference on}, pages 258--263. IEEE, 1990.

\bibitem{greiner2009bell}
Gero Greiner, Tim Nonner, and Alexander Souza.
\newblock The bell is ringing in speed-scaled multiprocessor scheduling.
\newblock In {\em Proceedings of the twenty-first annual symposium on
  Parallelism in algorithms and architectures}, pages 11--18. ACM, 2009.

\bibitem{gupta2012scheduling}
Anupam Gupta, Sungjin Im, Ravishankar Krishnaswamy, Benjamin Moseley, and Kirk
  Pruhs.
\newblock Scheduling heterogeneous processors isn't as easy as you think.
\newblock In {\em Proceedings of the twenty-third annual ACM-SIAM symposium on
  Discrete algorithms}, pages 1242--1253. Society for Industrial and Applied
  Mathematics, 2012.

\bibitem{gupta2010scalably}
Anupam Gupta, Ravishankar Krishnaswamy, and Kirk Pruhs.
\newblock Scalably scheduling power-heterogeneous processors.
\newblock In {\em International Colloquium on Automata, Languages, and
  Programming}, pages 312--323. Springer, 2010.

\bibitem{hajek1984optimal}
Bruce Hajek.
\newblock Optimal control of two interacting service stations.
\newblock {\em IEEE transactions on automatic control}, 29(6):491--499, 1984.

\bibitem{Harchol2013}
Mor Harchol-Balter.
\newblock {\em Performance modeling and design of computer systems: queueing
  theory in action}.
\newblock Cambridge University Press, 2013.

\bibitem{lam2008competitive}
Tak-Wah Lam, Lap-Kei Lee, Isaac~KK To, and Prudence~WH Wong.
\newblock Competitive non-migratory scheduling for flow time and energy.
\newblock In {\em Proceedings of the twentieth annual symposium on Parallelism
  in algorithms and architectures}, pages 256--264. ACM, 2008.

\bibitem{lam2008speedunitsize}
Tak-Wah Lam, Lap-Kei Lee, Isaac~KK To, and Prudence~WH Wong.
\newblock Speed scaling functions for flow time scheduling based on active job
  count.
\newblock In {\em European Symposium on Algorithms}, pages 647--659. Springer,
  2008.

\bibitem{lam2012improved}
Tak-Wah Lam, Lap-Kei Lee, Isaac~KK To, and Prudence~WH Wong.
\newblock Improved multi-processor scheduling for flow time and energy.
\newblock {\em Journal of Scheduling}, 15(1):105--116, 2012.

\bibitem{mukherjee2017optimal}
Debankur Mukherjee, Souvik Dhara, Sem~C Borst, and Johan~SH van Leeuwaarden.
\newblock Optimal service elasticity in large-scale distributed systems.
\newblock {\em Proceedings of the ACM on Measurement and Analysis of Computing
  Systems}, 1(1):25, 2017.

\bibitem{rosberg1982optimal}
Zvi Rosberg, P~Varaiya, and J~Walrand.
\newblock Optimal control of service in tandem queues.
\newblock {\em IEEE Transactions on Automatic Control}, 27(3):600--610, 1982.

\bibitem{weber1987optimal}
Richard~R Weber and Shaler Stidham.
\newblock Optimal control of service rates in networks of queues.
\newblock {\em Advances in applied probability}, 19(1):202--218, 1987.

\bibitem{xia2017service}
Li~Xia, Daniel Miller, Zhengyuan Zhou, and Nicholas Bambos.
\newblock Service rate control of tandem queues with power constraints.
\newblock {\em IEEE Transactions on Automatic Control}, 62(10):5111--5123,
  2017.

\bibitem{yao1995scheduling}
Frances Yao, Alan Demers, and Scott Shenker.
\newblock A scheduling model for reduced cpu energy.
\newblock In {\em Foundations of Computer Science, 1995. Proceedings., 36th
  Annual Symposium on}, pages 374--382. IEEE, 1995.

\end{thebibliography}


\end{document}